\documentclass[11pt,reqno]{amsart}

\usepackage[letterpaper, portrait, margin=1in]{geometry}

\usepackage{amssymb}
\usepackage{amsmath}
\usepackage{amsthm}
\usepackage{amsfonts}
\usepackage{comment}
\usepackage{graphicx}
\usepackage{enumerate}
\usepackage[bottom,flushmargin,hang]{footmisc}
\usepackage{comment}
\usepackage{parskip}
\usepackage{bm}
\usepackage{bbm}
\usepackage{dsfont}
\usepackage[authoryear,longnamesfirst]{natbib}
\usepackage[bookmarks=true, hidelinks]{hyperref}
\PassOptionsToPackage{unicode}{hyperref}
\PassOptionsToPackage{naturalnames}{hyperref}
\hypersetup{
pdfauthor={Ahnaf Rafi},
colorlinks=true,
linkcolor=blue,
urlcolor=blue,
citecolor=blue
}

\usepackage[nameinlink]{cleveref}
\usepackage{multirow}

\usepackage{multicol}
\usepackage{tikz}

\numberwithin{equation}{section}
\newtheorem{theorem}{Theorem}

\newtheorem{lemma}{Lemma}
\theoremstyle{definition}
\newtheorem{assumption}{Assumption}

\DeclareMathOperator{\var}{\text{Var}}

\renewcommand{\Pr}{\text{Pr}}

\def\ind{\perp\!\!\!\perp}
\newcommand{\supp}{\mathrm{supp}}
\renewcommand{\tilde}{\widetilde}

\newcommand{\Gn}{\mathbb{G}_{n}}

\newcommand{\calB}[0]{\mathcal{B}}

\newcommand{\calF}[0]{\mathcal{F}}

\newcommand{\calS}[0]{\mathcal{S}}

\newcommand{\calY}[0]{\mathcal{Y}}

\newcommand{\E}[0]{\bm{e}}
\newcommand{\R}[0]{\mathbb{R}}

\newcommand{\N}{\mathbb{N}}

\newcommand{\sphere}{\mathbbm{S}}
\newcommand{\sgn}{\mathrm{sgn}}
\newcommand{\pto}{\overset{\mathrm{p}}{\to}}

\newcommand{\I}[0]{\mathbb{I}}

\crefname{assumption}{assumption}{assumptions}

\renewcommand{\ind}{\mathbbm 1}

\newcommand{\op}{o_P}

\newcommand{\calU}{\mathcal{U}}

\renewcommand{\bar}{\overline}

\renewcommand{\i}{{\bm{i}}}
\newcommand{\bN}{{\bm{N}}}

\newcommand{\cI}{{\mathcal{I}}}
\newcommand{\EN}{\mathbb{\mathbb E}_{\bN}}
\renewcommand{\E}{\mathbb{\mathbb E}}
\newcommand{\e}{{\bm{e}}}
\newcommand{\calE}{\mathcal{E}}

\renewcommand{\hat}{\widehat}

\usepackage{comment}
\allowdisplaybreaks

\title{Gaussian Approximation for Maximum Score and Non-Smooth M-Estimators
with Multiway Dependence}

\author{Harold D. Chiang}
\address{Department of Economics, University of Wisconsin-Madison, 1180
Observatory Drive, Madison, WI 53706, USA.}
\email{hdchiang@wisc.edu}

\author{Ahnaf Rafi}
\address{Department of Economics, University of Virginia, 248 McCormick Road,
Monroe Hall Room 237, Charlottesville, VA 22903, USA.}
\email{ahnafrafi@virginia.edu}

\thanks{We thank Bruce E. Hansen for his invaluable comments.
  All remaining errors are ours.}

\begin{document}

\begin{abstract}
 The maximum score estimator of \citet{manski1975maximum} provides an elegant approach to estimate slope coefficient  in binary choice models without requiring parametric assumptions on the error distribution.  However, under i.i.d. sampling, it admits a non-Gaussian limiting distribution and exhibits cube-root asymptotics, which complicates statistical inference.
We show that, under multiway dependence, the maximum score estimator attains asymptotic normality at a parametric rate. We obtain this surprising result through the development of a general M-estimation theory that accommodates non-smooth objective functions under multiway dependence. We further propose and establish the validity of a bootstrap procedure for inference.

\end{abstract}

\maketitle

\section{Introduction}
The maximum score estimator of \citet{manski1975maximum} offers a robust and conceptually elegant method for estimating parametric coefficients in binary response models under minimal assumptions. In contrast to conventional parametric approaches, such as maximum likelihood estimation, it does not require specification of the error distribution or independence between covariates and unobservables, relying instead solely on the ordinal information encoded in the sign of the latent index. This distribution-free feature renders the estimator particularly attractive from a theoretical standpoint, as it is robust to misspecification of the disturbance term and remains valid in settings where standard parametric assumptions are untenable.

Despite these appealing properties, the estimator exhibits fundamentally non-standard asymptotic behaviour under i.i.d.\ sampling. In a seminal contribution, \citet{kim1990cube} show that the maximum score estimator converges at the cube-root rate $n^{1/3}$, rather than the usual parametric rate, and possesses a  non-Gaussian Chernoff-type limiting distribution characterised by the argmax of a stochastic process. This irregular asymptotic structure arises from the non-smooth, discontinuous nature of the objective function, which invalidates the classical central limit theory and the standard bootstrap is known to fail \citep{abrevaya2005bootstrap}.
These features pose substantial challenges for statistical inference. In particular, the absence of asymptotic normality complicates the construction of confidence intervals and hypothesis tests, while the slower convergence rate implies reduced precision in finite samples. As a consequence, practical implementation often requires nonstandard inference procedures tailored to cube-root asymptotics-- see, for example, \cite{delgado2001subsampling,patra2018consistent,cattaneo2020bootstrap}--which further limits the estimator’s accessibility and widespread empirical use.

In this paper, we show that, under multiway dependence (also known as multiway clustering)—a complex dependence structure commonly encountered in empirical research \citep{miglioretti2007marginal,petersen2008estimating,cameron2011robust,thompson2011simple,cameron2015practitioner,mackinnon2023cluster}—the original maximum score estimator of \citet{manski1975maximum} attains asymptotic normality with a parametric convergence rate. This finding stands in sharp contrast to the classical i.i.d.\ setting and suggests that, in this context, dependence can be beneficial rather than detrimental for inference.

Our maximum score results are built upon a general asymptotic theory for non-smooth M-estimators under multiway dependence, which may be of independent interest.
A substantial body of work has developed asymptotic theory under multiway dependence; see, for example,  \citet{menzel2021bootstrap}, \citet{2021daveziesEmpiricalProcessResults,davezies2025analytic}, \citet{chiang2023inference}, and \citet{graham2024sparse}. Despite these advances, a general asymptotic framework for $M$-estimators—whether based on smooth or non-smooth objective functions—remains largely unexplored in this setting. The present paper contributes to this literature by developing a unified theory for $M$-estimation under multiway dependence, thereby filling this gap.

Despite asymptotic normality (equivalently, asymptotic Gaussianity), analytical inference is challenging because the asymptotic variance depends on derivatives of smooth population objects. These derivatives are difficult to estimate because their sample counterparts are non-smooth and do not provide direct analogues. We circumvent these challenges by proposing a procedure based on a multiplier bootstrap. Building upon our Gaussian approximation theory for M-estimators, this approach provides a robust and practical framework for implementation.

The intuition behind our Gaussian approximation result is related to the asymptotic theory for
$U$-process-based estimators, such as simplicial depth \citep{liu1990notion} and Oja's spatial medians \citep{oja1983descriptive}, established in \cite{arcones1994estimators}, as well as the notion of data-generating-process (DGP)–induced smoothing studied in panel quantile regression with common shocks by \citet{chiang2026panel}. However, the present setting is substantially more complex. Unlike the convex and subgradient-friendly quantile check loss, the maximum score objective is inherently discontinuous, which renders standard arguments inapplicable. In particular, even when a smooth approximation exists, it is non-trivial to show that the maximiser of the original objective is well approximated by that of the smoothed problem. Addressing this issue requires a careful quadratic approximation argument, drawing on
$U$-process M-estimation ideas and the stochastic differentiability framework of \cite{pollard1985new}.
A further complication arises in establishing stochastic equicontinuity for higher-order projections. In the classical
$U$-process settings of \cite{arcones1994estimators}, such results rely on weak convergence for degenerate processes \citep{arcones1993limit}, which is not available under multiway dependence. To overcome this, we develop an alternative approach combining localised empirical process methods with an iterated argument that first delivers a preliminary rate and then sharpens it, using maximal inequalities to control the resulting entropy integrals \citep{chen2026cross}.

Building on these ingredients, we establish a general asymptotic theory for non-smooth M-estimators under multiway dependence. By carefully localising the effect of DGP-induced smoothing, we control the discrepancy between the original objective and its smooth approximation, which in turn yields asymptotic linearity, a Gaussian approximation, and a valid bootstrap procedure for the maximum score estimator at the parametric rate.

A number of approaches have been proposed to address the non-Gaussian limiting distribution of the maximum score estimator and the attendant challenges for statistical inference. In a seminal contribution, \citet{horowitz1992smoothed} proposes the smoothed maximum score estimator, whereby the discontinuous indicator function in the objective is replaced with a smooth kernel approximation. This modification renders the objective function differentiable and delivers asymptotic normality under i.i.d.\ sampling, albeit at a nonparametric rate determined by the bandwidth choice. While this restores inferential tractability, it introduces an additional tuning parameter and entails smoothing bias, both of which require careful calibration in practice. More recently, \citet{chen2025relu} extend this line of work by employing the ReLU functions to encode the sign alignment restrictions, obtaining  asymptotic normality at a nonparametric rate faster than $n^{-1/3}$, though similar issues concerning tuning and approximation bias persist.
An alternative strand of the literature exploits the maximum rank correlation approach. For example, \citet{lee1999root} considers a two-period panel model and show that, following an appropriate pairwise differencing transformation, the estimation problem can be recast as a maximum rank correlation estimator of the type analysed by \citet{sherman1993limiting}. This reformulation allows them to establish asymptotic normality under suitable regularity conditions. Closely related is \citet{honore2000panel}, which studies a conditional maximum score estimator for panel data and similarly relies on smoothing.
Nevertheless, all the aforementioned modifications alter the original estimator. For Manski's maximum score estimator, Gaussian asymptotics remain unavailable, and the inherent non-standard behaviour continues to pose a fundamental challenge for inference.
Our contribution demonstrates that the unmodified maximum score estimator of \cite{manski1975maximum} can, in fact, be asymptotically Gaussian under conditions of complex multiway dependence.

\subsection*{Notation}
Let $\N$ denote the set of positive integers and $\R$ denote the real line. For $a,b\in \R$, let $a\vee b=\max\{a,b\}$ and $a\wedge b = \min\{a,b\}$. Throughout the paper, the asymptotics are taken with respect to $n\to\infty$ for $n$ that will be defined in \Cref{sec:maximum_score}. For vectors $\bm{a},\bm{b}\in\mathbb{R}^K$, write $\bm{a}\le\bm{b}$ for coordinate-wise inequality.
 \(\|\cdot\|\) is the Euclidean norm unless otherwise specified. Henceforth, $\leadsto$ denotes weak convergence and $\pto$ denotes convergence in probability.

\section{Maximum Score Estimator}\label{sec:maximum_score}
In this section, we study the asymptotic properties of the maximum score estimator under multiway dependence.

Let \(K \in \mathbb{N}\), \(\bN = \left( N_{1}, \dots, N_{K} \right) \in
\mathbb{N}^{K}\) and \(n = \min \left\{ N_{1}, \dots, N_{K} \right\}\).
Each \(N_{k}\) is the sample size in dimension \(k\).
Let \(I_{\bN} = \prod_{k = 1}^{K} \left\{ 1, \dots, N_{k}
\right\}\) be the set of $K$-dimensional indices.
The econometrician observes the data \(\mathcal{D}_{\bN} = \left\{ W_{\i} \right\}_{\i \in I_{\bN}} =
\left\{ \left( Y_{\i}, X_{\i}^{\prime} \right)^{\prime} \right\}_{\i \in
I_{\bN}}\), where \(Y_{\i}\) is a binary outcome with values in \(\{- 1, 1\}\)
and \(X_{\i}\) is a \(d\)-dimensional vector of covariates.
Let \(\E_{\bN}\) denote sample averaging over \(\mathcal{D}_{\bN}\), so that
\(\E_{\bN} [f (W)] = \left| I_{\bN} \right|^{-1} \sum_{\i \in
\cI_{\bN}} f \left( W_{\i} \right)\).
Denote the surface of the unit sphere by
\(\sphere^{d - 1} = \left\{ v :
\|v\| = 1 \right\}\) and
\(\mathcal{B} \subseteq \sphere^{d - 1}\).
The maximum score estimator solves
\begin{equation}
  \hat{Q}_{\bN} \left( \hat{\beta}_{\bN} \right) = \sup_{b \in \mathcal{B}}
  \hat{Q}_{\bN} (b) - \op (1)
  = \sup_{b \in \mathcal{B}} \E_{\bN} \left[ Y \cdot \ind \left\{ X^{\prime}
  b \geq 0 \right\} \right] - \op (1).
  \label{eqn--max-score-estimator-def}
\end{equation}

For each $k=1,\dots,K$,  define
\(
\calE_k=\{\e\in\{0,1\}^K:\|\e\|_0=k\},
\)
so that $\{0,1\}^K=\bigcup_{k=0}^K\calE_k$ and let $\e_k$ denote the $k$-th unit vector. In other words, $\calE_1=\{\e_k:k=1,...,K\}$ and $\calE_2=\{\e=(e_1,...,e_K)\in \{0,1\}^K: \sum_{k=1}^K e_k=2\}$.

We impose the following standard assumption for multiway dependence, which is also used in \cite{menzel2021bootstrap} and \cite{2021daveziesEmpiricalProcessResults,davezies2025analytic}, among others.
\begin{assumption}
\label{asm--exchangeable}
\(\left\{ W_{\i} \right\}_{\i \in\mathbb N^K}\) are separately exchangeable and
dissociated.
Equivalently, each \(W_{\i}\) admits the Aldous--Hoover--Kallenberg
representation:
there is a Borel measurable \(\tau (\cdot)\) and
mutually independent latent \(\mathrm{Uniform} [0, 1]\) variables
\(\left\{ U_{\bm{i} \odot \bm{e}} : \bm{i} \in
\N^{K}, \bm{e} \in \{0, 1\}^{K} \setminus \{\bm{0}\} \right\}\) such that
\begin{align}
  W_{\i} \overset{d}{=} \tau \left( \left\{ U_{\bm{i} \odot \bm{e}} \right\}_{\bm{e} \in \{0,
  1\}^{K} \setminus \{\bm{0}\}} \right),
  \label{eqn--AHK-representation}
\end{align}
where $\odot$ denotes the Hadamard product.
\end{assumption}

Under \Cref{asm--exchangeable}, the \(W_{\i}\)'s are typically not mutually independent,
but identically distributed.
Let \(W = (Y, X^{\prime})^{\prime}\) denote a random variable distributed
according to the common marginal distribution of \(W_{\i}\), and denote
\begin{equation}
  m_{0} (X) = \E [Y | X].
  \label{eqn--m0-def}
\end{equation}
We impose the following modelling and regularity conditions  on the joint distributions of $(Y,X')$.
\begin{assumption}
\label{asm--max-score-consistency}
\hfill
\begin{enumerate}[(i)]
  \item \label{asm--latent-linear-median-regression}
    \(Y = 2 \ind \left\{ X^{\prime} \beta_{0} - \epsilon \geq 0 \right\} -
    1\) for some \(\beta_{0} \in \sphere^{d - 1}\), and the conditional median of the error satisfies \(\mathrm{Med}
    [\epsilon | X] = 0\) almost surely.
  \item \label{asm--index-sgn-disagree-pos-pr}
    For any \(b \in \sphere^{d - 1} \setminus \beta_{0}\), \(\Pr \left\{ \left|
    m_{0} (X) \right| > 0, \,\sgn \left( X^{\prime} \beta_{0} \right) \neq \sgn
    \left( X^{\prime} b \right) \right\} > 0\).
  \item \label{asm--continuity-under-constraint}
    The estimation set \(\mathcal{B}\) in \eqref{eqn--max-score-estimator-def}
    is a closed subset of \(\sphere^{d - 1}\).
    Furthermore for every \(b \in \mathcal{B}\), \(\Pr \left\{ X^{\prime} b =
    0 \right\} = 0\).
\end{enumerate}
\end{assumption}

Lower-level sufficient conditions for \Cref{asm--max-score-consistency}
\eqref{asm--index-sgn-disagree-pos-pr} and
\eqref{asm--continuity-under-constraint} are well known in the literature.
For example \citet{1985manskiSemiparametricAnalysisDiscrete} assumes an index
restriction coupled with a special regressor.
In particular, Assumption 2 of \citet{1985manskiSemiparametricAnalysisDiscrete}
requires there to exist an index \(l \in \{1, \dots, k\}\) such that \(\beta_{0,
l} \neq 0\) and the corresponding regressor, \(X [l]\), is assumed to have full
support on \(\R\) (almost surely) conditional on the other regressors.
In proving consistency, Theorem 1 of
\citet{1985manskiSemiparametricAnalysisDiscrete} further restricts the effective
parameter space to be \(\mathcal{B} = \left\{ b \in \sphere^{d - 1} : \left|
b_{l} \right| \geq \eta \right\}\), for some \(0 < \eta < \beta_{0, l}\).
Together with some additional regularity conditions, these all imply
\Cref{asm--max-score-consistency}
\eqref{asm--index-sgn-disagree-pos-pr} and
\eqref{asm--continuity-under-constraint}.
Whilst the assumptions of \citet{1985manskiSemiparametricAnalysisDiscrete} allow
for some discrete (or constant) regressors, \citet{kim1990cube}
directly assume absolute continuity of \(X\) with respect to Lebesgue measure
with a differentiable density, and further assume that \(X / \|X\|\) has a
density with respect to surface measure on the sphere.
These imply \Cref{asm--max-score-consistency}
\eqref{asm--continuity-under-constraint} with \(\mathcal{B} = \sphere^{d - 1}\).

The following results provides consistency of the maximum score estimator under multiway dependence.
\begin{lemma}
\label{lem--max-score-consistency}
Suppose \Cref{asm--exchangeable,asm--max-score-consistency} hold, and further assume
the restricted estimation set \(\mathcal{B}\) in
\eqref{eqn--max-score-estimator-def} satisfies \(\beta_{0} \in \mathcal{B}\) for
\(\beta_{0}\) in
\Cref{asm--max-score-consistency} \eqref{asm--latent-linear-median-regression}.
Then \(\widehat{\beta}_{n}\) in \eqref{eqn--max-score-estimator-def} satisfies
\(\widehat{\beta}_{n} \pto \beta_{0}\).
\end{lemma}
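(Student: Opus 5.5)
We follow the standard argmax consistency argument (e.g. Theorem 2.1 of Newey–McFadden or Corollary 3.2.3 of van der Vaart–Wellner). Set $Q(b) = \E\left[ Y \ind\{X'b \ge 0\}\right]$. It then suffices to establish three things:
\begin{enumerate}[(a)]
\item $\sup_{b \in \mathcal{B}} |\hat{Q}_{\bN}(b) - Q(b)| \pto 0$;
\item $Q$ is continuous on the compact set $\mathcal{B}$;
\item $\beta_0$ is the unique maximiser of $Q$ on $\mathcal{B}$.
\end{enumerate}
Given these, for any open neighbourhood $V$ of $\beta_0$, compactness and continuity give $\sup_{b \in \mathcal{B}\setminus V} Q(b) < Q(\beta_0) - \delta$ for some $\delta > 0$. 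Combining this with (a) and the near-maximisation property \eqref{eqn--max-score-estimator-def} yields $\Pr\{\hat\beta \notin V\} \to 0$.

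For (b) and (c) we use the usual Manski argument, which only involves the marginal law of $W$ and so is unaffected by dependence. Because $\mathrm{Med}[\epsilon|X]=0$, we have $\sgn(m_0(X)) = \sgn(X'\beta_0)$ whenever $m_0(X)\ne 0$. Then
\[
Q(\beta_0) - Q(b) = \E\bigl[ |m_0(X)|\, \ind\{\sgn(X'\beta_0)\neq\sgn(X'b)\}\bigr],
\]
up to the null sets $\{X'b=0\}$ excluded by \Cref{asm--max-score-consistency}\eqref{asm--continuity-under-constraint}. This difference is strictly positive for $b\ne\beta_0$ by \eqref{asm--index-sgn-disagree-pos-pr}. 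Continuity at $b\in\mathcal{B}$ follows by dominated convergence: $\ind\{X'b_m\ge 0\}\to\ind\{X'b\ge0\}$ on $\{X'b\ne0\}$, which has probability one. $\mathcal{B}$ is compact because it is closed in the sphere.

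The main step is (a), the uniform law of large numbers under multiway dependence. The class $\calF=\{(y,x)\mapsto y\ind\{x'b\ge0\}: b\in\mathcal{B}\}$ is VC (half-spaces through the origin, multiplied by a fixed function) with envelope $1$. We would invoke a Glivenko–Cantelli result for separately exchangeable dissociated arrays, such as the uniform LLN in \citet{2021daveziesEmpiricalProcessResults}, which holds for classes with finite uniform entropy and an integrable envelope.

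If we prefer a self-contained route, we would argue as follows. For fixed $b$, the Hoeffding-type decomposition via the AHK representation \eqref{eqn--AHK-representation} gives
\[
\var(\hat Q_{\bN}(b)) \lesssim \sum_{k} N_k^{-1} \le K/n \to 0,
\]
because pairs $\i,\i'$ sharing no coordinate are independent and the fraction of pairs sharing a coordinate is $O(1/n)$. This gives pointwise convergence. We would then upgrade to uniform convergence by a bracketing argument. By continuity of $Q$ and compactness, for each $\varepsilon$ there are finitely many brackets $[l_j,u_j]$ of functions of the form $y\ind\{x'b\ge0\}$, taken over small caps of $b$, with $\E[u_j-l_j]<\varepsilon$. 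The envelope functions of the caps are measurable, and the $L^1$ bracket sizes shrink to zero by dominated convergence again, using $\Pr\{X'b=0\}=0$. Applying the pointwise LLN to the finitely many $l_j,u_j$ gives uniform convergence. This bracketing construction, specifically verifying that the cap brackets shrink in $L^1$ at every $b\in\mathcal{B}$, is the step we expect to require the most care.
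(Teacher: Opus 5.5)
Your proposal is correct and follows essentially the same route as the paper: a uniform LLN from the VC property of half-spaces combined with the Glivenko--Cantelli result of Davezies et al.\ for separately exchangeable arrays, Manski's identity $Q(\beta_{0})-Q(b)=\E\bigl[|m_{0}(X)|\,\ind\{\sgn(X'\beta_{0})\neq\sgn(X'b)\}\bigr]$ for unique maximisation, continuity of $Q$ by bounded convergence using $\Pr\{X'b=0\}=0$, and then the standard argmax consistency theorem (Newey--McFadden, Theorem 2.1). Your optional self-contained route to the uniform LLN (a variance bound of order $\sum_{k}N_{k}^{-1}$ plus finitely many $L^{1}$ brackets over small caps) is also sound; it trades the external empirical-process theorem for an elementary argument but needs \Cref{asm--max-score-consistency}\eqref{asm--continuity-under-constraint} to make the brackets shrink, whereas the paper's VC route gives the uniform LLN for arbitrary $\mathcal{B}\subseteq\sphere^{d-1}$.
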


\begin{proof}[Proof of \Cref{lem--max-score-consistency}]
See \Cref{sec--prf--lem--max-score-consistency}.
\end{proof}

We now consider asymptotic Gaussianity.
Consistency of \(\hat{\beta}_{\bN}\) allows us to focus on values of \(\beta \in
\calB\) close to \(\beta_{0}\).
The following local parameter space will be particularly useful for our purpose:
\begin{equation*}
  \calB_{0} = \{\beta \in \sphere^{d - 1} : \|\beta - \beta_{0}\| \leq
  \sqrt{2}\} = \{\beta \in \sphere^{d - 1} : \beta_{0}' \beta \geq 0\}.
\end{equation*}
\(\calB_{0}\) admits a smooth parametrisation by
\(\Theta = \left\{ \theta \in \R^{d - 1} : \|\theta\| \leq 1 \right\}\).
Let \(B_{0}\) be an orthonormal \(d \times (d - 1)\) basis matrix
for the subspace \(\{\beta : \beta^{\prime} \beta_{0} = 0\}\),
i.e. \(B_{0}^{\prime} B_{0} = \I_{d - 1}\), the $(d-1)$-dimensional identity
matrix, and \(\beta_{0}^{\prime} B_{0}\) is the \((d - 1)\)-dimensional zero
vector.
Define
\begin{equation}
  \beta (\theta) = B_{0} \theta + \sqrt{1 - \|\theta\|^{2}} \cdot \beta_{0}.
  \label{eqn--beta-local}
\end{equation}
The parametrisation in \eqref{eqn--beta-local} is a diffeomorphism from
\(\Theta\) to \(\calB_{0}\) and satisfies \(\beta_{0} = \beta (0)\), \(\|\beta
(\theta)\| = 1\) for every \(\theta\), and \(\beta_{0}^{\prime} \beta (\theta) =
\sqrt{1 - \|\theta\|^{2}}\).
Thus as an added bonus, \(\beta_{0}\) is ``interior'' since it is identified
with \(\theta = 0 \in \mathrm{int} (\Theta)\).
Finally, let
\begin{equation*}
  \hat{\theta}_{\bN} = B_{0}^{\prime} \hat{\beta}_{\bN} = B_{0}^{\prime} \left(
  \hat{\beta}_{\bN} - \beta_{0} \right).
\end{equation*}
Then \(\hat{\theta}_{\bN} \pto 0\) by \(\hat{\beta}_{\bN} \pto \beta_{0}\).
It can be shown that
\begin{equation*}
  \hat{\beta}_{\bN} = B_{0} \hat{\theta}_{\bN} + \sgn \left( \hat{\beta}_{\bN}'
  \beta_{0} \right) \sqrt{1 - \|\hat{\theta}_{\bN}\|^{2}} \cdot \beta_{0}.
\end{equation*}
Hence, \(\hat{\beta}_{\bN} = \beta \left( \hat{\theta}_{\bN} \right)\) on the
event \(\left\{ \hat{\beta}_{\bN} \in \calB_{0} \right\}\).

We further impose the following regularity conditions on the joint distributions
of $Y$, $X$, and the latent shocks $(U_{\e'})_{\e'\le \e}$ arising from
\eqref{eqn--AHK-representation}, for a collection of indices $\e\in \calE_1 \cup
\calE_2$ that will be specified below.
Notably, these assumptions do not impose topological structure on the latent
shocks.
\begin{assumption}
\label{asm--X-ac-smooth-m-diff}
The following approximate maximiser condition holds at rate $n^{-1}$:
\[\sup_{\theta \in \Theta}\hat{Q}_{\bN} ( \beta (\theta)
) - \hat{Q}_{\bN} ( \hat{\beta}_{\bN})  = o_P (n^{-1}).\] Furthermore,
denote $\mathcal{U}_\e=\{U_{\e'}\}_{\e'\le \e}$ and $\mathbf{u}_\e=\{u_{\e'}\}_{\e'\le \e}$. The following conditions hold.
\begin{enumerate}[(i)]
  \item \label{asm--X-ac-smooth-m-diff-p-smooth}
    For each $\bm{e} \in \calE_2$, the conditional distribution \(X \mid \calU_\e\) is absolutely continuous
    against Lebesgue measure, and the associated conditional density, \(p_{\bm{e}} (x | \bm{u}_\e)\), is
    continuously differentiable with respect to \(x\) on the interior of its
    support. Furthermore, $\E[\|X\|^3]<\infty$.
  \item \label{asm--X-ac-smooth-m-diff-m-smooth}
    For each $\bm{e} \in \calE_2$, let \(m_{0, \bm{e}} (x, \bm{u}_\e) = \E \left[ Y \middle| X = x, \calU_\e=\bm{u}_\e
    \right]\).
    Then \(m_{0, \bm{e}} (x, \bm{u}_\e)\) is continuously differentiable with respect to
    \(x\) for each \(\bm{u}_\e\) with derivative \(\dot{m}_{0, \bm{e}} (x, \bm{u}_\e) =
    \frac{\partial}{\partial x} m_{0, \bm{e}} (x, \bm{u}_\e)\).
  \item \label{asm--X-ac-smooth-m-diff-m-dominance}
    For each $\bm{e} \in \calE_2$, there exists \(m_{\ast} (x, \bm{u}_\e)\) that is square-integrable
  against \(\left( X, \calU_{\bm{e}} \right)\) and
    \begin{equation*}
    \max \left\{
    p_{\bm{e}} (x | \bm{u}_\e), m_{0, \bm{e}} (x, \bm{u}_\e), |\dot{m}_{0, \bm{e}} (x, \bm{u}_\e)| \right\} \leq
    m_{\ast} (x, \bm{u}_\e).
    \end{equation*}
  \item \label{asm--X-ac-smooth-m-diff-m-increasing}
    For each \(\e_k \in \calE_1\), \(m_{0, \e_k} (x, u) = 0\) if \(x^{\prime} \beta_{0} = 0\) and
    the map \(t \mapsto m_{0, \e_k} (x + t \beta_{0}, u)\) is strictly
    increasing in \(t\) almost surely with respect to the joint distribution of
    \(\left( X, U_{\bm{e}_{k}} \right)\).

\end{enumerate}
\end{assumption}
The approximate maximiser condition at rate $n^{-1}$ is mild and can be ensured
by the econometrician during numerical optimisation.
Condition \eqref{asm--X-ac-smooth-m-diff-p-smooth} imposes regularity on the conditional distribution of the covariates by requiring absolute continuity of \(X \mid \calU_\e\) and continuous differentiability of the associated density \(p_{\bm{e}}(x \mid \bm{u}_\e)\) in \(x\), which ensures the absence of point masses and enables local smooth approximations of the population objective. Condition \eqref{asm--X-ac-smooth-m-diff-m-smooth} complements this by requiring that the conditional regression function \(m_{0,\bm{e}}(x,\bm{u}_\e)\) is continuously differentiable in \(x\), with the derivative \(\dot{m}_{0,\bm{e}}(x,\bm{u}_\e)\) characterising the local curvature of the population objective. Condition \eqref{asm--X-ac-smooth-m-diff-m-dominance} introduces a square-integrable envelope \(m_\ast(x,\bm{u}_\e)\) that uniformly dominates the density, regression function, and its derivative, thereby ensuring the validity of dominated convergence arguments. Finally, Assumption \eqref{asm--X-ac-smooth-m-diff-m-increasing} imposes a strong median identification condition: the normalisation \(m_{0,\e_k}(x,u)=0\) when \(x'\beta_0=0\) centres the decision boundary, while strict monotonicity of \(t \mapsto m_{0,\e_k}(x+t\beta_0,u)\) guarantees a unique sign change across the hyperplane \(x'\beta_0=0\), ruling out flat regions and ensuring point identification together with the local curvature required for quadratic expansion.

The following result establishes asymptotic distributional theory for the
maximum score estimator under the local parametrisation \eqref{eqn--beta-local}.

\begin{theorem}
\label{thm--max-score-asymp-normal}
Let \Cref{asm--exchangeable,asm--max-score-consistency,asm--X-ac-smooth-m-diff}
hold, and suppose that \(n / N_{k} \to \lambda_{k} \in [0, \infty)\) for each
\(k \in \{1, \dots, K\}\).
Then there is a positive semi-definite matrix \(V\) such that
\begin{align*}
   n^{1 / 2} \hat \theta_{\bN} \leadsto \mathrm{N} (0, V).
\end{align*}
\end{theorem}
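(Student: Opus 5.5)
The plan is to decompose the objective with a Hoeffding-type projection along the Aldous--Hoover--Kallenberg representation \eqref{eqn--AHK-representation}, and then show that the \(\calE_1\) projections supply a smooth random criterion whose maximiser drives \(\hat{\theta}_{\bN}\) at rate \(n^{-1/2}\).

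Write \(g_\theta(W) = Y\,\ind\{X'\beta(\theta)\ge 0\} - Y\,\ind\{X'\beta_0\ge 0\}\) and \(M(\theta) = \E[g_\theta(W)]\). For \(\e\in\calE_1\cup\calE_2\), put \(g_{\theta,\e}(\bm{u}_\e)=\E[g_\theta(W)\mid \calU_\e=\bm{u}_\e]\). The centred criterion then decomposes as
\begin{align*}
\E_{\bN}[g_\theta] - M(\theta) = \sum_{k=1}^K \frac{1}{N_k}\sum_{i=1}^{N_k}\bigl(g_{\theta,\e_k}(U_{i\e_k}) - M(\theta)\bigr) + R_{\bN}(\theta),
\end{align*}
where \(R_{\bN}\) collects the degenerate higher-order projections.

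The key observation is DGP-induced smoothing. By \Cref{asm--X-ac-smooth-m-diff}\eqref{asm--X-ac-smooth-m-diff-p-smooth}--\eqref{asm--X-ac-smooth-m-diff-m-dominance}, each \(g_{\theta,\e}\) is an integral of \(m_{0,\e}p_\e\) over the wedge between the hyperplanes \(x'\beta(\theta)=0\) and \(x'\beta_0=0\). This makes it twice differentiable in \(\theta\), with
\begin{align*}
g_{\theta,\e_k}(u) = \theta' D_k(u) + \tfrac12\theta' H_k(u)\theta + o(\|\theta\|^2).
\end{align*}
Here \(D_k(u)\) is a surface integral over \(\{x'\beta_0=0\}\) of \(m_{0,\e_k}p_{\e_k}\,B_0'x\). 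By \Cref{asm--X-ac-smooth-m-diff}\eqref{asm--X-ac-smooth-m-diff-m-increasing}, \(m_{0,\e_k}=0\) on that hyperplane, so \(D_k\equiv 0\) and \(M\) has zero gradient at \(0\). The Hessians are
\begin{align*}
H_k(u) = -\int_{x'\beta_0=0}\bigl(\dot m_{0,\e_k}(x,u)'\beta_0\bigr)\,p_{\e_k}(x|u)\,B_0'xx'B_0\,d\sigma(x).
\end{align*}
These are negative semidefinite by monotonicity, and I will assume, or show via \Cref{asm--max-score-consistency}\eqref{asm--index-sgn-disagree-pos-pr}, that \(H=\E H_k\) is negative definite. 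Since the first-order term vanishes, one cannot take a derivative at \(\theta=0\) directly. Instead, the linear term in \(\theta\) must come from the \(\e_k\)-projection of the derivative of the criterion in a neighbourhood, as in \cite{arcones1994estimators}.

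I therefore plan to use the more careful expansion
\begin{align*}
\E_{\bN}[g_\theta] = \tfrac12\theta' H\theta + \theta' Z_{\bN} + \op(\|\theta\|^2 + n^{-1}),
\end{align*}
uniformly over \(\|\theta\|\le \delta_{\bN}\to0\). The score is
\begin{align*}
Z_{\bN} = \sum_k N_k^{-1}\sum_i \nabla_\theta g_{\theta,\e_k}(U_{i\e_k})\big|_{\theta=0}.
\end{align*}
Its derivative is actually nonzero realisation by realisation, because only the expectation of the surface integral of \(m_{0,\e_k}\) vanishes pointwise.

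Having this expansion, the argument proceeds in three steps.
\begin{enumerate}[(a)]
\item \textbf{Preliminary rate.} Using \Cref{lem--max-score-consistency} and the quadratic curvature \(M(\theta)\le -c\|\theta\|^2\), combine a localised maximal inequality for the separately exchangeable empirical process indexed by the VC class \(\{g_\theta\}\) with a peeling argument. Since the \(\calE_1\) part has variance of order \(\|\theta\|^2/n\) and \(R_{\bN}\) has variance of order \(\|\theta\|/n^2\), this gives a slow rate such as \(n^{-1/3}\).
\item \textbf{Sharpened rate.} Iterate: on the shrunken neighbourhood the \(R_{\bN}\) modulus improves, and repeating yields \(\|\hat{\theta}_{\bN}\|=O_P(n^{-1/2})\).
\item \textbf{Limit.} Combine the expansion with the \(o_P(n^{-1})\) approximate maximiser condition in \Cref{asm--X-ac-smooth-m-diff}, following Pollard's stochastic differentiability argument. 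This gives \(n^{1/2}\hat{\theta}_{\bN} = -H^{-1}n^{1/2}Z_{\bN}+\op(1)\). Since \(Z_{\bN}\) is a sum of \(K\) independent i.i.d. averages, the Lindeberg CLT with \(n/N_k\to\lambda_k\) yields normality with \(V = H^{-1}\bigl(\sum_k\lambda_k\var(\nabla g_{0,\e_k})\bigr)H^{-1}\). This \(V\) is positive semi-definite, and degenerate when all \(\lambda_k=0\).
\end{enumerate}

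The main obstacle is controlling \(R_{\bN}(\theta)-R_{\bN}(0)\) to order \(\op(n^{-1})\) uniformly over \(\|\theta\|\lesssim n^{-1/2}\). The indicators are discontinuous, so there is no smoothness to exploit, and there is no weak-convergence theory for degenerate multiway processes to fall back on. My first attempt will bound its \(L^2\) modulus by \(\|\theta\|^{1/2}\) times \(n^{-1}\) using a Hoeffding-type decomposition over \(\calE_2\) and higher. I will then apply the maximal inequality of \cite{chen2026cross} with a VC entropy integral \(J(\delta)\lesssim\delta\sqrt{\log(1/\delta)}\) over the localised class. This should give \(n^{-1}\cdot n^{-1/4}\sqrt{\log n}=o(n^{-1})\), which is precisely what the iterated-rate step requires.
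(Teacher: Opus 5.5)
Your skeleton is the paper's. It uses the Hoeffding decomposition \eqref{eq:hoeffding}, smooths the $\e_k$-projections via the surface-integral formula of \Cref{lem--kim1990-diff-is-surf}, uses the quadratic curvature of $\theta\mapsto Q(\beta(\theta))$, then a rate step, the approximate-argmax (completing-the-square) step, and a CLT for the $K$ independent i.i.d.\ averages. Two points need repair, and both are fixable within the hypotheses.

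\textbf{The score is self-contradictory.} You correctly note that \Cref{asm--X-ac-smooth-m-diff}\eqref{asm--X-ac-smooth-m-diff-m-increasing} forces $m_{0,\e_k}(\cdot,u)=0$ on $\partial A_0$ for each $u$, not merely after integrating out $U_{\e_k}$. Hence $D_k(u)=0$ for a.e.\ $u$. But $\nabla_\theta g_{\theta,\e_k}(u)|_{\theta=0}$ \emph{is} $D_k(u)$, so $Z_{\bN}=\sum_k\E_{k,\bN}D_k\equiv 0$. There is no other source of a linear term: the H\'ajek-projection derivative in \citet{arcones1994estimators} is exactly this object. The sentence asserting a nonzero score ``realisation by realisation'' should go. It describes what a nondegenerate limit would need, namely a median restriction holding only after integrating out $U_{\e_k}$, and that is not what (iv) imposes.

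With $Z_{\bN}\equiv0$ your expansion reads $\E_{\bN}[g_\theta]=\tfrac12\theta'H\theta+\op(\|\theta\|^2+n^{-1})$. The $o_P(n^{-1})$ approximate-maximiser condition then gives $n\|\hat\theta_{\bN}\|^2=\op(1)$, so $V=0$. This still proves the statement, which only asserts a positive semi-definite $V$. It also coincides with the paper: its $\Delta_k(u)'=[\int_{\partial A_0}m_{0,\e_k}(x,u)p_{\e_k}(x|u)x'\,\sigma_0(\mathrm{d}x)]B_0$ is literally your $D_k(u)'$.

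\textbf{Nonsingularity of $H$.} The argmax step needs $H$ nonsingular, and \Cref{asm--max-score-consistency}\eqref{asm--index-sgn-disagree-pos-pr} cannot supply it: it gives a unique maximiser, not curvature. The paper derives nonsingularity from the strict monotonicity in (iv), via $\sigma_0\{x\in\partial A_0:(\dot m_0(x)'\beta_0)p(x)>0\}>0$. You should do the same rather than assume it.

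\textbf{Where your route differs.} The genuine difference is the $\calE_2$ remainder. You use the crude $L^2$ radius $\|\theta\|^{1/2}$. This is legitimate, because $\pi_\e$ is an $L^2$ contraction and the sign-disagreement wedge has probability $O(\|\theta\|)$ under the density conditions. It gives a bound of order $n^{-5/4}$ up to logarithms at $\|\theta\|\lesssim n^{-1/2}$. The paper instead uses \Cref{lem--kim1990-diff-is-surf} to make $\theta\mapsto\pi_\e f_\theta$ Lipschitz with a square-integrable constant $L_{*,\e}$. Then $\sigma_\e\propto\delta$, and Corollary~1 of \citet{chen2026cross} gives $O_P((\log n)^2n^{-3/2})$. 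Your version is cruder but needs no differentiability of the second-order projections. The paper's is sharper because it exploits DGP-induced smoothing at the $\calE_2$ level as well.

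\textbf{The two-stage rate is unnecessary.} Your argument (a)--(b) can be collapsed. As in the proof of \Cref{thm--non-smooth-M-Estimation}, Theorem~4 of \citet{chen2026cross} gives unlocalised bounds $O_P(n^{-k/2})$ on the $\calE_k$ terms for $k\ge2$. These already yield $c\|\hat\theta_{\bN}\|^2\le\|\hat\theta_{\bN}\|\,O_P(n^{-1/2})+O_P(n^{-1})$, hence $\|\hat\theta_{\bN}\|=O_P(n^{-1/2})$ in one pass. Even with your own variance orders, peeling gives $n^{-1/2}$ directly, not $n^{-1/3}$.
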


\begin{proof}[Proof of \Cref{thm--max-score-asymp-normal}]
A proof can be found in \Cref{sec:proof--thm-max-score-asymp-normal} of the
appendix.
\end{proof}

The proof proceeds by verifying the conditions of \Cref{thm--non-smooth-M-Estimation} in \Cref{sec:M-estimation} for the maximum score estimator. In particular, establishing stochastic differentiability and local stochastic equicontinuity is highly non-trivial. The former draws on analytic techniques for maximum score developed in \cite{kim1990cube}, while the latter relies on a localisation argument for empirical processes based on a local maximal inequality established in \cite{chen2026cross}.

An inspection of the proof of \Cref{thm--max-score-asymp-normal} reveals that the asymptotic variance possesses a complex structure and depends on various unknown quantities; consequently, obtaining a consistent variance estimator is challenging. Fortunately,
the bootstrap procedure described in \Cref{thm:bootstrap} of \Cref{sec:M-estimation} provides a simple and valid framework for statistical inference regarding the maximum score estimator under multiway dependence. This validity is guaranteed as the assumptions of \Cref{thm--max-score-asymp-normal} sufficiently satisfy all underlying requirements.


\section{M-Estimation with Multiway Dependence}\label{sec:M-estimation}
In this section, we develop a general asymptotic theory for non-smooth M-estimators under multiway dependence as well as a bootstrap procedure for statistical inference.

Let $\calF=\{f_\theta:\calS\to \mathbb{R}:\theta\in \Theta\subset\mathbb{R}^d,\; \E|f_\theta (W)|<\infty\}$, where $\Theta$ contains a neighbourhood of the origin and $\theta\mapsto f_\theta $ may be non-smooth.
We define the population objective function for the M-estimation problem by
\(
Q(\theta)=\E[f_\theta(W)], \; \theta\in\Theta,
\)
and, without loss of generality, normalise the true parameter to $\theta_0$ to the origin.
Suppose we observe random variables
\(
\mathcal D_\bN=\{ W_\i :\i \in I_\bN \},
\)
an estimator based on $\mathcal D_\bN$ can now be defined as
\begin{align*}
    \hat \theta_{\bN}=\underset{\theta\in\Theta}{\text{argmax}}\:\EN f_\theta,
\end{align*}
or more generally
   \begin{align*}
     \sup_{\theta\in \Theta}\EN f_\theta-\EN f_{\hat\theta_{\bN}}=o_P(n^{-1}).
    \end{align*}

Before stating our assumptions, let us introduce the Hoeffding-type decomposition \citep{hoeffding1948class,chiang2023inference} for multiway dependence.
For an $f:\calS\to \R$ with $\E [f(W_\i)]=0$\footnote{Otherwise one may replace $ f$ with  $\tilde f(W_\i)=f(W_\i)-\E[f(W_\i)]$.}  and any $\bm{i}\in I_\bN$, define the conditional expectation
\[
(P_{\bm{e}}f)(\{U_{\bm{i}\odot \bm{e}'}\}_{\bm{e}'\le \bm{e}})
=\mathbb{E}\!\left[f(W_\i)\mid\{U_{\bm{i}\odot \bm{e}'}\}_{\bm{e}'\le \bm{e}}\right].
\]
The orthogonal projections $\pi_\e$ are then defined recursively: for $\e_k\in \calE_1$,
\[
(\pi_{\bm{e}_k}f)(U_{\bm{i}\odot \bm{e}_k})
=(P_{\bm{e}_k}f)(U_{\bm{i}\odot \bm{e}_k}),
\]
and for $\bm{e}\in\cup_{k=2}^K\calE_k$,
\[
(\pi_{\bm{e}}f)\bigl(\{U_{\bm{i}\odot \bm{e}'}\}_{\bm{e}'\le \bm{e}}\bigr)
=(P_{\bm{e}}f)\bigl(\{U_{\bm{i}\odot \bm{e}'}\}_{\bm{e}'\le \bm{e}}\bigr)
-\sum_{\bm{e}'\le \bm{e},\, \bm{e}'\neq \bm{e}}
(\pi_{\bm{e}'}f)\bigl(\{U_{\bm{i}\odot \bm{e}''}\}_{\bm{e}''\le \bm{e}'}\bigr).
\]
Following Lemma~1 in \cite{chiang2023inference}, for any \(\ell\in \supp(\bm{e})\)\footnote{That is, $\supp(\e) = \{ j=1,\dots,K : e_j \ne 0\}$.} the random variable
\(
(\pi_{\bm{e}}f)\bigl(\{U_{\bm{i}\odot \bm{e}'}\}_{\bm{e}'\le \bm{e}}\bigr)
\)
is centred conditionally on
\(
\{U_{\bm{i}\odot \bm{e}'}\}_{\bm{e}'\le \bm{e}-\bm{e}_{\ell}}.
\)
Define
\(
I_{\bm{N},\bm{e}} = \{\bm{i}\odot \bm{e} : \bm{i}\in I_\bN\},
\)
so that
\(
|I_{\bm{N},\bm{e}}| = \prod_{k'\in\supp(\bm{e})} N_{k'}.
\)
Accordingly, define the $\e$-specific Hoeffding-type projection  by
\[
H_{\bm{N}}^{\bm{e}}(f)
=\frac{1}{|I_{\bm{N},\bm{e}}|}
\sum_{\bm{i}\in I_{\bm{N},\bm{e}}}
(\pi_{\bm{e}}f)\bigl(\{U_{\bm{i}\odot \bm{e}'}\}_{\bm{e}'\le \bm{e}}\bigr).
\]
We then obtain the Hoeffding-type decomposition
\begin{align}
  \mathbb{E}_N f
  = \sum_{k=1}^K \sum_{\bm{e}\in\calE_k} H_{\bm{N}}^{\bm{e}}(f).
  \label{eq:hoeffding}
\end{align}
We impose the following conditions for asymptotic theory of general non-smooth M-estimators.
\begin{assumption}[Non-smooth M-estimation]
\label{asm--M-estimation-general}
Assume the following hold:
\begin{enumerate}[(i)]
  \item \label{asm--M-estimation-general-id}
    (Identification) $Q$ has a unique maximum attained at $\theta=0$
    and the following identity holds
    \begin{align}
      Q(\theta)=Q(0)-\frac{1}{2}\theta'H\theta+o(\|\theta\|^2),
      \label{eq:a_quadratic_approximation}
    \end{align}
    where $H$ is a symmetric and positive definite $d\times d$ matrix.
  \item \label{asm--M-estimation-general-consistency}
    (Consistency) Let $\{\hat \theta_{\bN} \}$ be such that
    \begin{align}
      \hat \theta_{\bN} =o_P(1)\label{eq:a_consistency}
    \end{align}
    and it approximates the maximiser of the sample counterpart of $Q$ in the sense that
    \begin{align}
      \sup_{\theta\in \Theta}\EN f_\theta-\EN f_{\hat\theta_{\bN}}=o_P(n^{-1}).\label{eq:a_solution}
    \end{align}
     \item \label{asm--M-estimation-general-stoch-diff}
    (Stochastic Differentiability)
          For each unit vector $\e_k\in \calE_1$, there
    exists a measurable $\R^d$-valued function $\Delta_k=\Delta_k (U_{\e_k})$ satisfying $\E \Delta_k=0$ and $\E \|\Delta_k\|^2<\infty$, and $r_k=r_k(U_{\e_k};\theta)$ such that
    $r_k(u;0)=0$ for all $u\in (0,1)$,
    \begin{align*}
     r_k(u;\theta)=\frac{\pi_{\e_k}(f_\theta-\E[f_\theta])
      (u)-\pi_{\e_k}(f_0-\E[f_0] )(u)-\theta'\Delta_k(u)}{ \|\theta\| }
    \end{align*}
    for all $\theta\in \Theta\setminus \{0\}$, $u\in (0,1)$, and
    \begin{align}
   \sup_{\|\theta\|\le
      \delta}\frac{|\sqrt{N_{k}} (\E_{k, \bN}-\E) r_k(\cdot;\theta)|}{1 +
      n^{1/2} \|\theta\|}=o_P(1),
      \label{eq:a_differentiablity}
    \end{align}
    where $\mathbb \E_{k, \bN}=N_k^{-1}\sum_{i_k=1}^{N_k}\delta_{i_k}$ is the sample average over $i_k=1,...,N_k$.

  \item \label{asm--M-estimation-general-stoch-eq}
    (Local Stochastic Equicontinuity)
  The class of functions $\calF$ is of VC-type with characteristics $A\ge (e^{2(K-1)}/16)\vee e$ and $v\ge 1$ and an envelope $F$ with $\E[F^2]<\infty$. Furthermore, for $\e\in\calE_2$, and $\delta_n=O(n^{-1/2})$, it holds that
\begin{align}
\sup_{\|\theta\|\le \delta_n}\left|H_{\bN}^\e(f_\theta - f_0) \right|=o_P(n^{-1})\label{eq:a_stochastic_equicontinuity}
    \end{align}

\end{enumerate}
\end{assumption}

Assumption \ref{asm--M-estimation-general} (i) imposes a standard identification condition, together with a quadratic expansion of the population objective, and normalises the unique maximiser to the origin. Part (ii) requires the existence and consistency of an estimator $\hat\theta_{\bN}$, which can typically be established using conventional M-estimation arguments; see Section 2 of \citet{newey1994large}. Part (iii) imposes a mild stochastic differentiability condition on the first-order projections. Importantly, it accommodates a broad class of non-smooth objective functions; see \citet{pollard1985new} for details and sufficient conditions. Finally, part (iv) imposes a local stochastic equicontinuity condition on the localised second-order projections of the objective function, which can typically be verified using an appropriate maximal inequality. This requirement is weaker than the comparable stochastic equicontinuity condition employed in $U$-process results, such as Theorem 1 of \citet{arcones1994estimators}, as it only requires \ref{eq:a_stochastic_equicontinuity} to hold for $\delta_n = O(n^{-1/2})$, rather than for all sequences $\delta_n \to 0$.

The following provides asymptotic theory for general non-smooth M-estimators  with multiway dependence. 

\begin{theorem}[Non-smooth M-estimation]\label{thm--non-smooth-M-Estimation}
    Suppose \Cref{asm--exchangeable,asm--M-estimation-general} hold, and  $n/N_k\to \lambda_k\in[0,\infty)$ for each $k=1,...,K$, then
    \begin{align*}
       n^{1/2}\hat \theta_{\bN}  =  n^{1/2}H^{-1}\bar \psi_\bN+o_P(1)\leadsto N(0,V),
    \end{align*}
    where $\bar \psi_\bN=\sum_{k=1}^K \mathbb \mathbb \mathbb E_{k,\bN}\Delta_k$, $V=H^{-1}\Omega H^{-1}$,  and $\Omega=\sum_{k=1}^K \lambda_k\E[\Delta_k(U_{\bm{1}\odot \e_k})\Delta_k(U_{\bm{1}\odot \e_k})']$.

\end{theorem}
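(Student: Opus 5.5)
The argument follows Pollard's (1985) quadratic-approximation route, as adapted to $U$-processes by Arcones and Gin\'e, using the Hoeffding-type decomposition \eqref{eq:hoeffding}. Write $G_\bN(\theta)=\EN(f_\theta-f_0)$. Then
\[
G_\bN(\theta)=Q(\theta)-Q(0)+\sum_{k=1}^K\sum_{\e\in\calE_k}H_\bN^\e\bigl(f_\theta-f_0-\E[f_\theta-f_0]\bigr).
\]
The first-order terms ($\e=\e_k$) are exactly $(\E_{k,\bN}-\E)\pi_{\e_k}(f_\theta-\E f_\theta)-(\E_{k,\bN}-\E)\pi_{\e_k}(f_0-\E f_0)$, since the $\pi_{\e_k}$ are centred. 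By \eqref{eq:a_differentiablity} they equal
\[
\theta'\E_{k,\bN}\Delta_k+\|\theta\|\,N_k^{-1/2}\,o_P\bigl(1+n^{1/2}\|\theta\|\bigr)
\]
uniformly on $\|\theta\|\le\delta$. Because $n\le N_k$, we have $N_k^{-1/2}\le n^{-1/2}$, so this remainder is $o_P(\|\theta\|n^{-1/2}+\|\theta\|^2)$. Combining with \eqref{eq:a_quadratic_approximation} gives
\[
G_\bN(\theta)=\theta'\bar\psi_\bN-\tfrac12\theta'H\theta+o_P(\|\theta\|^2+n^{-1/2}\|\theta\|)+R_\bN(\theta),
\]
where $R_\bN(\theta)=\sum_{k\ge2}\sum_{\e\in\calE_k}H^\e_\bN(f_\theta-f_0)$ collects the higher-order projections.

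\textbf{Step 1: a preliminary rate for $R_\bN$.} The VC-type condition and $\E F^2<\infty$ give a maximal inequality for $\e$-projections of degree $|\e|\ge2$: $\E\sup_\theta|H_\bN^\e(f_\theta-f_0)|\lesssim|I_{\bN,\e}|^{-1/2}\lesssim n^{-1}$. This should follow from a Chen--Kato-type (\citealp{chen2026cross}) or Davezies et al.\ symmetrisation inequality for degenerate multiway components, with the entropy integral finite thanks to the VC characteristics. Hence $\sup_\Theta|R_\bN|=O_P(n^{-1})$ globally, even without localisation. I will double-check that this global bound holds for all $|\e|\ge2$ (for $|\e|\ge3$ one even gets $o_P(n^{-1})$). \textbf{Step 2: rate.} Evaluate at $\hat\theta$. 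Since $G_\bN(\hat\theta)\ge G_\bN(0)-o_P(n^{-1})=-o_P(n^{-1})$ by \eqref{eq:a_solution}, and consistency \eqref{eq:a_consistency} places $\hat\theta$ in the $\delta$-ball, positive definiteness of $H$ gives
\[
c\|\hat\theta\|^2\le\|\hat\theta\|\bigl(|\bar\psi_\bN|+o_P(n^{-1/2})\bigr)+O_P(n^{-1})+o_P(\|\hat\theta\|^2).
\]
Since $\bar\psi_\bN=O_P(n^{-1/2})$ by independence of the $U_{\e_k}$ across $i_k$ and $\E\|\Delta_k\|^2<\infty$, this yields $\|\hat\theta\|=O_P(n^{-1/2})$.

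\textbf{Step 3: sharpening.} Once $\|\hat\theta\|\le\delta_n=O(n^{-1/2})$ with high probability, apply \eqref{eq:a_stochastic_equicontinuity} for $\e\in\calE_2$, together with the $o_P(n^{-1})$ bound for $|\e|\ge3$, to obtain $R_\bN=o_P(n^{-1})$ uniformly on that ball. Then, uniformly on $\|\theta\|\le Mn^{-1/2}$,
\[
G_\bN(\theta)=\theta'\bar\psi_\bN-\tfrac12\theta'H\theta+o_P(n^{-1}).
\]
Compare $\hat\theta$ with $\tilde\theta=H^{-1}\bar\psi_\bN$, which is also $O_P(n^{-1/2})$. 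Using $G_\bN(\hat\theta)\ge G_\bN(\tilde\theta)-o_P(n^{-1})$ and completing the square gives $\tfrac12(\hat\theta-\tilde\theta)'H(\hat\theta-\tilde\theta)=o_P(n^{-1})$, hence $n^{1/2}\hat\theta=n^{1/2}H^{-1}\bar\psi_\bN+o_P(1)$.

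\textbf{Step 4: CLT.} The summands $n^{1/2}\E_{k,\bN}\Delta_k$ are independent across $k$, because they depend on disjoint families of the latent $U$'s. Each is an i.i.d.\ average with variance $(n/N_k)\E\Delta_k\Delta_k'\to\lambda_k\E\Delta_k\Delta_k'$; when $\lambda_k=0$ the term vanishes in $L^2$. The Lindeberg CLT and Slutsky then give $N(0,H^{-1}\Omega H^{-1})$.

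The main obstacle is Step 1, the global $O_P(n^{-1})$ control of the degenerate projections, which is needed before localisation is available. Assumption \ref{asm--M-estimation-general}(iv) only gives the local $o_P(n^{-1})$ statement, so I would derive the global bound from the VC-type condition via a multiway symmetrisation maximal inequality applied to each degenerate $\pi_\e$ separately, with the higher-order projections handled by the same bound.
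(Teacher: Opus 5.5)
Your proposal is correct and follows essentially the same route as the paper's proof. Both use the Hoeffding decomposition, a global $O_P(n^{-1})$ bound on the degree-$\ge 2$ projections from the VC-type maximal inequality of \cite{chen2026cross} to get $\|\hat\theta_{\bN}\|=O_P(n^{-1/2})$, sharpening via \eqref{eq:a_stochastic_equicontinuity} on the $O(n^{-1/2})$ ball, comparison with $\tilde\theta_{\bN}=H^{-1}\bar\psi_{\bN}$ by completing the square, and a CLT using independence across $k$. Your single-inequality rate argument in Step 2, which keeps the first-order remainder as $o_P(\|\theta\|n^{-1/2}+\|\theta\|^2)$, is simply a tidier and more careful version of the paper's two-case argument and its $o_P(1)$ shorthand.
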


\begin{proof}[Proof of \Cref{thm--non-smooth-M-Estimation}]
A proof can be found in \Cref{sec:proof--thm--non-smooth-M-estimation} of the
appendix.
\end{proof}

Note that the variance can be zero -- in this case, the limiting Gaussian
distribution degenerates at zero.

Our proof strategy differs from existing approaches, such as \citet{arcones1994estimators}, in that we do not rely on stochastic equicontinuity. In the present multiway setting, an analogue of the weak convergence result for degenerate
$U$-processes (cf.\ Corollary 5.7 of \citet{arcones1993limit}) is not available, and hence the higher-order projection terms must be controlled by alternative means. To this end, we adopt an approach combining localised empirical process methods with an iterated argument that first delivers an \(O_P(n^{-1})\) rate for the higher-order projection terms and then sharpens it to \(o_P(n^{-1})\), using Condition \eqref{eq:a_stochastic_equicontinuity}.

In non-smooth M-estimation problems, estimation of the asymptotic variance is typically challenging. The components $\Delta_k$ and $H$ involve unknown derivatives of population objects and, because of the lack of smoothness of $\theta\mapsto f_\theta$, generally do not possess obvious feasible sample counterparts. Consequently, direct variance estimation is challenging. Fortunately, the linear representation and Gaussian approximation established in Theorem \ref{thm--non-smooth-M-Estimation} enable us to conduct inference via a bootstrap procedure. The next theorem states this result.
\begin{theorem}[Bootstrapping  M-estimators]\label{thm:bootstrap}
Suppose the conditions of Theorem \ref{thm--non-smooth-M-Estimation} hold and the asymptotic variance $V$ is positive definite. For each $k=1,\ldots,K$, let
\(
\xi^k=(\xi_1^k,\ldots,\xi_{N_k}^k)
\)
consist of i.i.d. random variables with $\E[\xi_{i_k}^k]=\var(\xi_{i_k}^k)=1$, $\E[|\xi_{i_k}^k|^3]<\infty$, and $\xi^1,\ldots,\xi^K$ are mutually independent and independent of $\mathcal D_\bN$. Define
\(
\xi_\i=\prod_{k=1}^K \xi_{i_k}^k
\)
and \(
\EN \xi f_\theta=|I_\bN|^{-1}\sum_{\i\in I_\bN}\xi_\i f_\theta(W_\i).
\)
Let $\hat\theta_\bN^*$ denote a bootstrap $M$-estimator
 that satisfies
 \[\sup_{\theta\in\Theta} \EN \xi f_\theta-\EN \xi f_{\hat \theta_{\bN}^* }=o_P(n^{-1}).\] Then, as $n\to \infty$,
we have
\[
n^{1/2}(\hat\theta_\bN^*-\hat\theta_\bN)\overset{*}{\leadsto }N(0,V)
\]
conditionally on $\mathcal D_\bN$
with probability approaching one.
\end{theorem}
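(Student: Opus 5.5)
The plan is to mimic the proof of \Cref{thm--non-smooth-M-Estimation} in the bootstrap world, with $\EN$ replaced by the weighted average $\EN \xi$. We then show that the bootstrap estimator admits the linear representation
\[
n^{1/2}\hat\theta_\bN^* = n^{1/2}H^{-1}\bar\psi^*_\bN + o_{P}(1), \qquad \bar\psi^*_\bN=\sum_{k=1}^K \E_{k,\bN}\bigl[\xi^k \Delta_k\bigr].
\]
Here $\E_{k,\bN}[\xi^k\Delta_k]=N_k^{-1}\sum_{i_k}\xi^k_{i_k}\Delta_k(U_{i_k\e_k})$.

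Subtracting the representation $n^{1/2}\hat\theta_\bN = n^{1/2}H^{-1}\bar\psi_\bN+o_P(1)$ from \Cref{thm--non-smooth-M-Estimation} then gives
\[
n^{1/2}(\hat\theta^*_\bN-\hat\theta_\bN)=n^{1/2}H^{-1}\sum_{k}N_k^{-1}\sum_{i_k}(\xi^k_{i_k}-1)\Delta_k(U_{i_k\e_k})+o_P(1).
\]
Conditionally on the data, the leading term is a sum over $k$ of independent sums of independent, mean-zero terms. We apply a conditional Lyapunov/Berry--Esseen CLT, using $\E|\xi|^3<\infty$ together with the law of large numbers $N_k^{-1}\sum_{i_k}\Delta_k\Delta_k'\pto\E[\Delta_k\Delta_k']$. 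The conditional covariance then converges to $\sum_k\lambda_k H^{-1}\E[\Delta_k\Delta_k']H^{-1}=V$. Positive definiteness of $V$ lets us pass from convergence of the law to conditional weak convergence in probability, for example via Kolmogorov distance with a Pólya-type argument. An $o_P(1)$ unconditional remainder is also $o_{P^*}(1)$ with probability approaching one by Markov's inequality applied conditionally.

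To get the bootstrap linear representation, the key observation is that $\xi_\i f_\theta(W_\i)$ is again a separately exchangeable, dissociated array. We augment the Aldous--Hoover--Kallenberg representation by attaching $\xi^k_{i_k}$ to $U_{\i\odot\e_k}$, i.e.\ we take $\tilde U_{i_k\e_k}=(U_{i_k\e_k},\xi^k_{i_k})$ coded into a uniform. Then $\tilde f_\theta(\tilde W)=\xi f_\theta(W)$ falls within the framework of \Cref{asm--M-estimation-general}. We verify each condition in turn.

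\emph{Identification.} The population objective is $\E[\xi f_\theta]=Q(\theta)$, since $\E\xi=1$ and $\xi$ is independent of the data, so (i) holds with the same $H$.

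\emph{Stochastic differentiability.} The first-order projection becomes
\[
\pi_{\e_k}(\xi f_\theta-\E f_\theta)=\xi^k\,\pi_{\e_k}(f_\theta-\E f_\theta)+(\xi^k-1)\,\E f_\theta .
\]
The second piece is $\theta$-smooth, with $\E f_\theta-\E f_0=O(\|\theta\|^2)$, so its contribution to the remainder is negligible. The new derivative is $\xi^k\Delta_k$, and the new remainder is $\xi^k r_k$. Condition \eqref{eq:a_differentiablity} for $\xi^k r_k$ follows from multiplier-process arguments, for instance a multiplier inequality of van der Vaart--Wellner type on the localised class. This needs care: I would first try to obtain it from VC-type entropy of the $r_k$ class and an $L^2$-shrinking condition, rather than from \eqref{eq:a_differentiablity} directly.

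\emph{Local stochastic equicontinuity.} The higher-order projections of $\xi f_\theta$ for $\e\in\calE_2$ are products of $\xi^{k}\xi^{l}$ with $\pi$-terms plus lower-order pieces. The multiplied class remains VC-type with envelope $|\xi|F$, so the local maximal inequality argument used to get \eqref{eq:a_stochastic_equicontinuity} transfers. \emph{Consistency.} We need $\hat\theta^*_\bN=o_P(1)$, which follows from a uniform LLN for $\EN\xi f_\theta$ together with identification.

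Once these are verified, the proof of \Cref{thm--non-smooth-M-Estimation} applies verbatim to give the bootstrap representation. The main obstacle is the equicontinuity condition (iv) for the multiplier-weighted class. Since \eqref{eq:a_stochastic_equicontinuity} is a high-level assumption rather than a consequence of VC-type alone, transferring it to $\xi f_\theta$ is not automatic. My first approach would be to decompose $\xi_\i=\prod_k(1+(\xi^k_{i_k}-1))$ and expand. Each resulting term is then either $H^\e_\bN(f_\theta-f_0)$ itself, or a degenerate multiplier sum. The degenerate multiplier sums would be controlled via the localised maximal inequality and the preliminary-rate-then-sharpen iteration, conditionally on the data.
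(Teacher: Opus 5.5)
Your proposal follows essentially the same route as the paper. You attach the multipliers $\xi^k_{i_k}$ to the first-order latent variables to obtain a Hoeffding-type decomposition on the expanded space, and you derive the unconditional representation $n^{1/2}\hat\theta^*_\bN = n^{1/2}H^{-1}\bar\psi^*_\bN + o_P(1)$ with first-order terms $\xi^k\Delta_k$ and remainders $\xi^k r_k$. You then subtract the representation from \Cref{thm--non-smooth-M-Estimation} and finish with a conditional Berry--Esseen/Lyapunov argument, exactly as the paper does. The paper handles the points you flag as needing care (the multiplier versions of \eqref{eq:a_differentiablity} and \eqref{eq:a_stochastic_equicontinuity}, and consistency of $\hat\theta^*_\bN$) only by remarking that the weights rescale the envelope of a VC-type class, so it is no more detailed there than your sketch.
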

\begin{proof}[Proof of \Cref{thm:bootstrap}]
A proof can be found in \Cref{sec:proof_thm:bootstrap} of the appendix.
\end{proof}

 The restrictions imposed on the weights $\xi^k$ are met, for instance, when $\xi^k$ is drawn from either an $\text{Exponential}(1)$ or a $\text{Poisson}(1)$ distribution. This bootstrap is closely related to the pigeonhole bootstrap considered by \citet{owen2007pigeonhole} and \citet{2021daveziesEmpiricalProcessResults}.
Unlike the pigeonhole bootstrap, which relies on multinomial weights, the proposed method assigns i.i.d. weights to each $k$.

The proof proceeds by introducing an alternative Hoeffding-type decomposition for the bootstrapped process on an expanded probability space that incorporates the bootstrap weights. We employ i.i.d.\ weights for each $k$, which is convenient for deriving this decomposition.  We then show that the bootstrap estimator admits an analogous unconditional asymptotic linear representation with the corresponding multiplicative i.i.d.\ weights, a property that underpins the validity of our bootstrap procedure and may be of independent interest.

\section{Conclusion}
In summary, this paper develops a unified asymptotic framework for non-smooth M-estimators under multiway dependence and applies it to the maximum score estimator, establishing asymptotic Gaussianity at the parametric rate together with a valid bootstrap procedure. These results highlight that complex dependence structures can fundamentally alter the inferential properties of non-smooth estimators, rendering standard difficulties under i.i.d. settings tractable in this context.

\bigskip

\appendix
\section*{Appendix}
\section{Proof of \Cref{thm--max-score-asymp-normal}}\label{sec:proof--thm-max-score-asymp-normal}
\begin{proof}[Proof of \Cref{thm--max-score-asymp-normal}]
The proof proceeds by verifying the conditions in
\Cref{asm--M-estimation-general}.
Condition \eqref{asm--M-estimation-general-consistency} of
\Cref{asm--M-estimation-general} holds by
\(\widehat{\beta}_{\bN} \pto \beta_{0}\) as shown in \Cref{lem--max-score-consistency}. The  approximate maximiser condition at rate $n^{-1}$ is directly assumed.
In what follows, let
\begin{equation}
  \begin{gathered}
    A (\theta) = \left\{ x : x^{\prime} \beta (\theta) \geq 0 \right\}; \quad
    A_{0} = A (0) = \left\{ x : x^{\prime} \beta_{0} \geq 0 \right\}; \\
    \text{note that} \quad
    \partial A (\theta) = \left\{ x : x^{\prime} \beta (\theta) = 0 \right\};
    \quad \partial A_{0} = \left\{ x : x^{\prime} \beta_{0} = 0 \right\}.
  \end{gathered}
\end{equation}

Condition \eqref{asm--M-estimation-general-id} of \Cref{asm--M-estimation-general}
holds by the same reasoning as in Example 6.4 of \citet{kim1990cube}.
Let \(m_{0} (x) = \E [Y | X = x] = \int m_{0, k} (x, u) p_{U_{\bm{e}_{k}}} (u)
\; \mathrm{d} u\).
Under
\Cref{asm--X-ac-smooth-m-diff}, \(Q (\beta) := \E \left[ Y \cdot \ind
\left\{ X^{\prime} \beta \geq 0 \right\} \right]\) is twice continuously
differentiable with Hessian
\begin{equation*}
  \partial^{2} Q \left( \beta_{0} \right) = - \int_{\partial A_{0}}
  \left( \dot{m}_{0} (x)^{\prime} \beta_{0} \right) x x^{\prime} p (x) \;
  \sigma_{0} \left( \mathrm{d} x \right)
\end{equation*}
where \(\sigma_{0} (\cdot)\) is surface measure on \(\partial A_{0}\).
Let \(F (\theta) = Q (\beta (\theta))\).
Since \(\partial Q (\beta_{0}) = 0\), it follows that
\(\partial F (0) = 0\).
Furthermore, the Hessian is
\begin{equation*}
  \partial^{2} F (0) = - H := B_{0}^{\prime} \partial^{2} Q (0) B_{0}.
\end{equation*}
For non-singularity of \(H\), it suffices to have
\begin{equation}
  \sigma_{0} \left\{ x \in \partial A_{0} : \left( \dot{m}_{0} (x)^{\prime}
  \beta_{0} \right) p (x) > 0 \right\} > 0.
  \label{eqn--non-sing-cond}
\end{equation}
This is guaranteed by \Cref{asm--X-ac-smooth-m-diff}
\eqref{asm--X-ac-smooth-m-diff-m-increasing} since
\(\dot{m}_{0} (x)^{\prime} \beta_{0} = \frac{\mathrm{d}}{\mathrm{d} t}
m_{0} (x + t \beta_{0}, u) |_{t = 0}\).
At \(x^{\prime} \beta_{0} = 0\) \Cref{asm--X-ac-smooth-m-diff}
\eqref{asm--X-ac-smooth-m-diff-m-increasing} requires \(m_{0} (x)\) to be
strictly increasing which ensures \eqref{eqn--non-sing-cond}.

Next, we verify Condition \eqref{asm--M-estimation-general-stoch-eq} of
\Cref{asm--M-estimation-general}.
Define the class of functions \[\calF = \left\{ \ind \{ x^{\prime} \beta (\theta) \geq 0 \} : \theta \in
\Theta \right\}.\]
This class has the trivial envelope \(F= 1\).
The proof of \Cref{lem--exchangeable-ulln-for-max-score} shows that \(\calF\) is
a VC-subgraph class of dimension at most \(\dim (\beta) + 1\) and thus it is a
VC-type class with appropriate characteristics for Assumption
\eqref{asm--M-estimation-general-stoch-eq}.
Consider $k=2$, \(\bm{e} \in \mathcal{E}_{2}\), and define for each $\delta>0$, define the class of functions
 \[\calF_\delta = \left\{ \ind \{ x^{\prime} \beta (\theta) \geq 0 \} -\ind
\{ x^{\prime} \beta_0 \geq 0\} : \|\theta\| \le \delta \right\}.\]
Note that $|I_{\bN, \e}|^{1 / 2} \ge n$.
Under \Cref{asm--X-ac-smooth-m-diff},
\Cref{lem--kim1990-diff-is-surf} shows that $\theta \mapsto \pi_{\bm{e}} f_\theta$ ($\pi_\e $ is defined above \eqref{eq:hoeffding} in Section \ref{sec:M-estimation}) is differentiable with respect to $\theta$ for each $\bm{u}_\e$,  and the derivative as a function of $\theta$ and $\bm{u}_\e$ is dominated by an
$L_{\ast, \bm{e}} \left( \bm{U}_{\bm{e}} \right)$:
\begin{equation*}
 \left| \pi_{e} f_{\theta} - \pi_{e} f_0 \right| \leq L_{*,\bm{e}} \left( \bm{U}_{\bm{e}} \right) \delta,
  \label{eqn--needs-to-be-quadratic-in-delta-1}
\end{equation*}
with $\E \left[ L_{*,\bm{e}} (\bm{U}_{\bm{e}})^2 \right] < \infty$.
As a result, we have
\[\sup_{\|\theta\| \leq \delta} \E [ | \pi_{\bm{e}} f_{\theta} - \pi_{\bm{e}} f_{0} |^{2} ] \leq \delta^{2}\cdot \E \left[ L_{*,\bm{e}} \left( \bm{U}_{\bm{e}} \right)^{2} \right].\]
Hence by applying Corollary 1 in \cite{chen2026cross}  with $   \sigma_\e=\delta \sqrt{\E [ L_{*,\bm{e}} \left( \bm{U}_{\bm{e}} \right)^{2} ]} $ and $F=1$, we have
\begin{equation*}
  \left| I_{\bN, \e} \right|^{1 / 2} \left( \mathbb{E} \left [ \left\|
  H_{\bN}^{\e} (f) \right\|_{\calF_\delta}^{q} \right] \right)^{1 / q} \lesssim
  \delta|\log\delta |+\frac{|\log \delta|^2}{\sqrt{n}}.
\end{equation*}
Since $ \left| I_{\bN, \e} \right|^{1/2} \ge n$, by Markov's inequality, we have
\begin{equation*}
  \left\| H_{\bN}^{\e} (f) \right\|_{\calF_\delta} = O_{p} \left(
  \frac{\delta|\log \delta|}{n}+\frac{|\log \delta|^2}{n^{3/2}}\right).
\end{equation*}
Set $\delta=\delta_n=Cn^{-1/2}$ for any $C>0$, this leads to
\begin{equation*}
  \left\| H_{\bN}^{\e} (f) \right\|_{\calF_{\delta_n}} = O_{P} \left(\frac{(\log
  n)^2}{n^{3/2}}\right)= o_P \left( n^{-1} \right),
\end{equation*}
which verifies Assumption
\Cref{asm--M-estimation-general} \eqref{asm--M-estimation-general-stoch-eq}.

We now verify Condition \eqref{asm--M-estimation-general-stoch-diff} of
\Cref{asm--M-estimation-general}.
Let \(\phi_{k} = p_{\e_k} \cdot m_{0, \e_k}\).
Note that for \(f_{\theta} (w) = y \ind \left\{ x^{\prime} \beta (\theta) \geq 0
\right\}\), by the Law of Iterated Expectations,
\begin{equation*}
  \pi_{\bm{e}_{k}} f _{\theta} = \lambda_{k} (u, \theta) := \int_{A (\theta)}
  \phi_{k} (x, u) \; \mathrm{d} x = \int_{A (\theta)} m_{0, \e_k} (x, u) p_{\e_k} (x |
  u) \; \mathrm{d} x.
\end{equation*}
Calculate pointwise derivatives with respect to \(\theta\) using
\Cref{lem--kim1990-diff-is-surf} in \Cref{sec:lem--kim1990-diff-is-surf} and set:
\begin{equation*}
  \Delta_{k} (u)^{\prime} = \dot{\lambda}_{k} (u, 0) = \partial_{\theta}
  \lambda_{k} (u, \theta) |_{\theta = 0} = \left[ \int_{\partial A_{0}} m_{0,
  \e_k} (x, u) p_{\e_k}
  (x | u) x^{\prime} \sigma_{0} (\mathrm{d} x) \right] B_{0},
\end{equation*}
and since \(m_{0, \e_k} (x, u) = 0\) whenever \(x^{\prime} \beta_{0} = 0\) (by
\Cref{asm--X-ac-smooth-m-diff} (\ref{asm--X-ac-smooth-m-diff-m-increasing})),
\begin{equation*}
  \begin{split}
    \ddot{\lambda}_{k} (u, \theta) =
    & \, - B_{0}^{\prime} \left[ \int_{\partial A_{0}} \left( \dot{m}_{0, \e_k}
    (T_{\theta} x, u) \beta (\theta) \right) x x^{\prime} p_{\e_k} (x | u)
    \sigma_{0}
    (\mathrm{d} x) \right] B_{0} \\
    & - \left[ \int_{\partial A_{0}} \theta^{\prime} B_{0} x \left( \dot{m}_{0,
    \e_k} (T_{\theta} x, u) \dot{\beta} (\theta) \right)^{\prime} x^{\prime} p_{\e_k}
    (x | u)
    \sigma_{0} (\mathrm{d} x) \right] B_{0}.
  \end{split}
\end{equation*}
Since \(\dot{m}_{0, \e_k} (\cdot, u)\), \(\beta (\cdot)\) and \(\dot{\beta}
(\cdot)\) are all continuous, and \(\theta \mapsto T_{\theta} (x)\) is
continuous for fixed \(x\), it follows that for each \(u\) \(\theta \mapsto
\ddot{\lambda} (u, \theta)\) is continuous.
In addition, by \Cref{asm--X-ac-smooth-m-diff}
\eqref{asm--X-ac-smooth-m-diff-m-dominance},
each element of \(\ddot{\lambda} (u, \theta)\) is dominated by an integrable
function that does not depend on \(\theta\).
By Lemma 2.4 of \citet{newey1994large},
\begin{equation}
  \theta \mapsto \E \left[ \ddot{\lambda} (\cdot; \theta) \right] \quad \text{is
  continuous and} \quad \sup_{\theta \in \Theta} \left\| \E_{k, \bN} \left[
  \ddot{\lambda}_{k} (\cdot, \theta) \right] - \E \left[ \ddot{\lambda}_{k}
  (\cdot, \theta) \right] \right\| \pto 0.
  \label{eqn--newey1994-ulln-hess}
\end{equation}
Let the remainder term be defined by
\begin{equation*}
  r_{k} (u; \theta) = \frac{\pi_{\bm{e}_{k}} \left( f_{\theta} - \E f_{\theta}
  \right) - \pi_{\bm{e}_{k}} \left( f_{0} - \E f_{0} \right) - \Delta_{k}
  (u)^{\prime} \theta}{\|\theta\|} \ind \{\|\theta\| > 0\}.
\end{equation*}
It suffices to show that for any sequence \(\delta_{n} \to 0\),
\begin{equation}
  \sup_{\|\theta\| \leq \delta_{n}} \frac{\left| \E_{k, \bN} \left[ r_{k}
  (\cdot; \theta) \right] - \E \left[ r_{k} (\cdot; \theta) \right]
  \right|}{\|\theta\|} = \op (1),
  \label{eqn--max-score-stoch-diff-sufficient}
\end{equation}
since
\begin{equation}
  \sup_{\|\theta\| \leq \delta_{n}} \frac{\left| \E_{k, \bN} \left[ r_{k}
  (\cdot; \theta) \right] - \E \left[ r_{k} (\cdot; \theta) \right]
  \right|}{\|\theta\| + n^{- 1 / 2}} \leq \sup_{\|\theta\| \leq \delta_{n}}
  \frac{\left| \E_{k, \bN} \left[ r_{k} (\cdot; \theta) \right] - \E \left[
  r_{k} (\cdot; \theta) \right] \right|}{\|\theta\|}.
  \label{eqn--max-score-stoch-diff-sufficient-why}
\end{equation}

To that end, by a second-order Taylor expansion in mean-value form, for
midpoint vectors \(\bar{\theta}_{1}\) and \(\bar{\theta}_{2}\) (stochastic and
deterministic respectively) between \(\theta\) and 0,
\begin{equation*}
  \E_{k, \bN} \left[ \|\theta\| r_{k} (\cdot; \theta) \right] = \frac{1}{2}
  \theta^{\prime} \E_{k, \bN} \left[ \ddot{\lambda}_{k} (\cdot;
  \bar{\theta}_{1}) \right] \theta \quad \text{and} \quad
  \E \left[ \|\theta\| r_{k} (\cdot; \theta) \right] = \frac{1}{2}
  \theta^{\prime} \E \left[ \ddot{\lambda}_{k} (\cdot; \bar{\theta}_{2}) \right]
  \theta.
\end{equation*}
Therefore,
\begin{equation*}
  \begin{split}
    \frac{\E_{k, \bN} \left[ r_{k} (\cdot; \theta) \right] - \E \left[
    r_{k} (\cdot; \theta) \right]}{\|\theta\|} =
    & \, \frac{1}{2 \|\theta\|^{2}} \theta' \left( \E_{k, \bN} \left[
    \ddot{\lambda}_{k} (\cdot; \bar{\theta}_{1}) \right] - \E \left[
    \ddot{\lambda}_{k} (\cdot; \bar{\theta}_{2}) \right] \right) \theta \\
    =
    & \, \frac{1}{2 \|\theta\|^{2}} \theta' \left( \left( \E_{k, \bN} - \E
    \right) \left[ \ddot{\lambda}_{k} (\cdot; \bar{\theta}_{1}) \right] + \E
    \left[ \ddot{\lambda}_{k} (\cdot; \bar{\theta}_{1}) \right] - \E \left[
    \ddot{\lambda}_{k} (\cdot; \bar{\theta}_{2}) \right] \right) \theta.
  \end{split}
\end{equation*}
By the spectral norm inequality (\(\|A v\|\leq \|A\| \|v\|\) for
matrix-vector pairs) and triangle inequality,
\begin{equation*}
  \begin{split}
    \frac{\left| \E_{k, \bN} \left[ r_{k} (\cdot; \theta) \right] - \E \left[
    r_{k} (\cdot; \theta) \right] \right|}{\|\theta\|}
    \leq
    & \, \frac{1}{2} \left\| \left( \E_{k, \bN} - \E \right) \left[
    \ddot{\lambda}_{k} (\cdot; \bar{\theta}_{1}) \right] \right\| + \frac{1}{2}
    \left\| \E \left[ \ddot{\lambda}_{k} (\cdot; \bar{\theta}_{1}) \right] - \E
    \left[ \ddot{\lambda}_{k} (\cdot; 0) \right] \right\| \\
    & \, + \frac{1}{2} \left\| \E \left[ \ddot{\lambda}_{k} (\cdot;
    \bar{\theta}_{2}) \right] - \E \left[ \ddot{\lambda}_{k} (\cdot; 0) \right]
    \right\|.
  \end{split}
\end{equation*}
Bound the first term on the right of the inequality with a supremum over all
\(\theta \in \Theta\).
Bound the second and third terms on the right with
the supremum over \(\|\theta\| \leq \delta_{n}\).
Then taking the supremum of the left of the inequality over \(\|\theta\| \leq
\delta_{n}\),
\begin{equation*}
  \begin{split}
    \sup_{\|\theta\| \leq \delta_{n}} \frac{\left| \E_{k, \bN} \left[ r_{k}
    (\cdot; \theta) \right] - \E \left[ r_{k} (\cdot; \theta) \right]
    \right|}{\|\theta\|} \leq
    & \, \frac{1}{2} \sup_{\theta \in \Theta} \left\| \left( \E_{k, \bN} - \E
    \right) \left[ \ddot{\lambda}_{k} (\cdot; \theta) \right] \right\| \\
     &+ \sup_{\|\theta\| \leq \delta_{n}}\left\| \E \left[ \ddot{\lambda}_{k}
    (\cdot; \theta) \right] - \E \left[ \ddot{\lambda}_{k} (\cdot; 0) \right]
    \right\|.
  \end{split}
\end{equation*}
Conclude that by \eqref{eqn--newey1994-ulln-hess},
\eqref{eqn--max-score-stoch-diff-sufficient} and
\eqref{eqn--max-score-stoch-diff-sufficient-why},
\begin{equation*}
  \sup_{\|\theta\| \leq \delta_{n}} \frac{\left| \E_{k, \bN} \left[ r_{k}
  (\cdot; \theta) \right] - \E \left[ r_{k} (\cdot; \theta) \right]
  \right|}{\|\theta\| + n^{- 1 / 2}} = \op (1).
\end{equation*}
\end{proof}
\section{Proof of \Cref{thm--non-smooth-M-Estimation}}\label{sec:proof--thm--non-smooth-M-estimation}
\begin{proof}[Proof of \Cref{thm--non-smooth-M-Estimation}]
 We first claim that $
n^{1/2}\hat \theta_{\bN} =O_P(1).$
    By \eqref{eq:a_quadratic_approximation}, there exists $\varepsilon>0$ and $c>0$ such that
    $Q(0)-Q(\theta)\ge c\|\theta\|^2$ for all $\|\theta\|<\varepsilon$. By \eqref{eq:a_consistency}, for large $n$, with probability approaching one,
    \begin{align*}
        cn\|\hat \theta_{\bN} \|^2\le& n(Q(0)-Q(\hat \theta_{\bN} ))\\
        =&n(\E-\EN+\EN)(f_0-f_{\hat \theta_{\bN} })\\
        \le&n(\E-\EN)(f_0-f_{\hat \theta_{\bN} })+n\left(\sup_{\theta\in\Theta} \EN f_\theta-\EN f_{\hat \theta_{\bN} }\right)\\
        =&n\EN((f_{\hat \theta_{\bN} }-f_0)-\E(f_{\hat \theta_{\bN} }-f_0))+o_P(1),
    \end{align*}
    where the last equality follows from \eqref{eq:a_solution}. Now, following the Hoeffding-type decomposition of \eqref{eq:hoeffding} and Condition \eqref{eq:a_differentiablity},
    \begin{align}
        &n\EN((f_{\hat \theta_{\bN} }-f_0)-\E(f_{\hat \theta_{\bN} }-f_0))\nonumber\\
        =&n\sum_{k=1}^K \sum_{\bm{e}\in\calE_k} H_{\bm{N}}^{\bm{e}}((f_{\hat \theta_{\bN} }-f_0)-\E(f_{\hat \theta_{\bN} }-f_0))\nonumber\\
        =&
       \sum_{k=1}^K\frac{n}{N_k}\sum_{\i\in I_{\bN,\e_k}}  \left(\hat \theta_{\bN}'
       \Delta_k (U_{i\odot \e_k})+
       R_k(U_{i\odot
       \e_k};\hat\theta_{\bN}) -\E[ R_k(U_{i\odot
       \e_k};\hat\theta_{\bN})]\right) \nonumber\\
       &
       +n\sum_{k=2}^K \sum_{\bm{e}\in\calE_k} H_{\bm{N}}^{\bm{e}}((f_{\hat \theta_{\bN} }-f_0)-\E(f_{\hat \theta_{\bN} }-f_0))\nonumber\\
        =&
       \sum_{k=1}^K\frac{n}{N_k}\sum_{i_k=1}^{N_k}  \hat \theta_{\bN}'
       \Delta_k (U_{i\odot \e_k})
       +n\sum_{k=2}^K \sum_{\bm{e}\in\calE_k} H_{\bm{N}}^{\bm{e}}((f_{\hat \theta_{\bN} }-f_0)-\E(f_{\hat \theta_{\bN} }-f_0))+o_P(1).\nonumber
    \end{align}
If the first term on the right has a dominating asymptotic order, due to the independence across $k=1,...,K$ and i.i.d. over $i_k=1,...,N_k$, by a standard CLT for i.i.d. data,
\begin{align*}
    cn\|\hat \theta_\bN\|^2\lesssim \sqrt{n} \|\hat\theta_\bN\|O_P(1)+o_P(1)
\end{align*}
and thus
\(\|\hat\theta_{\bN}\|=O_P(n^{-1/2})\).
Otherwise, by applying Theorem 4 in \cite{chen2026cross} under the VC-type condition, it holds that
\begin{align*}
   \| \hat\theta_{\bN}\|^2\lesssim_P & \sum_{k=2}^K \sum_{\bm{e}\in\calE_k} H_{\bm{N}}^{\bm{e}}((f_{\hat \theta_{\bN} }-f_0)-\E(f_{\hat \theta_{\bN} }-f_0))+o_P\left(\frac{1}{n}\right)=\sum_{k=2}^K O_P\left(\frac{1}{n^{k/2}}\right),
\end{align*}
which in turn yields that \(\|\hat\theta_{\bN}\|=O_P(n^{-1/2})\). The claim \(n^{1/2}\hat\theta_{\bN}=O_P(1)\) then follows from combining the two cases.

We now claim that the difference empirical process has the following asymptotically linear expansion
\begin{align}
    n\EN((f_{\hat \theta_{\bN} }-f_0)-\E(f_{\hat \theta_{\bN} }-f_0))=n \hat \theta_{\bN}'   \bar\psi_\bN+o_P(1).\label{eq:difference_empirical_process}
\end{align}
From the previous paragraph, $\|\hat\theta_{\bN}\|=O_P(n^{-1/2})$, and by Assumption \ref{asm--M-estimation-general} \eqref{asm--M-estimation-general-stoch-eq}, the Hoeffding-type decomposition and application of Theorem 4 in \cite{chen2026cross} under the VC-type condition, we have
\begin{align}
&n\EN((f_{\hat \theta_{\bN} }-f_0)-\E(f_{\hat \theta_{\bN} }-f_0))\nonumber\\
=& n \hat \theta_{\bN}'   \bar\psi_\bN+n\sum_{k=2}^K \sum_{\bm{e}\in\calE_k} H_{\bm{N}}^{\bm{e}}((f_{\hat \theta_{\bN} }-f_0)-\E(f_{\hat \theta_{\bN} }-f_0))+o_P(1)\nonumber\\
=& n \hat \theta_{\bN}'   \bar\psi_\bN+o_P\left(1\right)+\sum_{k=3}^K O_P\left(\frac{1}{n^{(k-2)/2}}\right)
       =n \hat \theta_{\bN}'   \bar\psi_\bN+o_P(1),\nonumber
\end{align}
which verifies the claim.

We now show the following asymptotic linear representation  $$n^{1/2}\hat \theta_{\bN} =H^{-1}\bar \psi_\bN+o_P(1)$$ and, subsequently, its asymptotic Gaussianity.
Using a quadratic expansion similar to the ones considered in the previous paragraphs,
\begin{align}
    n(\EN f_{\hat \theta_{\bN} }-\EN f_0)=&    n(Q(\hat \theta_{\bN} )-Q(0))+n(\EN -\E) (f_{\hat \theta_{\bN} }- f_0).\label{eq:difference_1}
\end{align}
By $n^{1/2}\hat \theta_{\bN} =O_P(1)$ as well as \eqref{eq:a_quadratic_approximation}, the first term on the right hand side of \eqref{eq:difference_1} satistifies
\begin{align*}
    n(Q(\hat \theta_{\bN} )-Q(0))=&-\frac{1}{2}n\hat \theta_{\bN}'   H\hat \theta_{\bN}+o_P(1)
\end{align*}
Furthermore, by \eqref{eq:difference_empirical_process}, \eqref{eq:difference_1} can be written as
    \begin{align*}
  n(\EN f_{\hat \theta_{\bN} }-\EN f_0) =&-\frac{1}{2}n\hat \theta_{\bN}'   H\hat \theta_{\bN} +n\hat \theta_{\bN}'  \bar \psi_\bN+o_P(1)\\
    =&-\frac{1}{2}n(\hat \theta_{\bN} -H^{-1}\bar \psi_\bN)' H(\hat \theta_{\bN} -H^{-1}\bar \psi_\bN)+\frac{1}{2}n\bar \psi_\bN'H^{-1}\bar \psi_\bN+o_P(1).
\end{align*}
Following the same line of  argument with $\tilde \theta_{\bN} =H^{-1}\bar \psi_\bN $ in place of $\hat \theta_{\bN} $, we have
\begin{align*}
     n\left(\EN f_{\tilde  \theta_{\bN} }-\EN f_0 \right)=&\frac{1}{2}n\bar \psi_\bN'H^{-1}\bar \psi_\bN+o_P(1).
\end{align*}
Talking the difference of the above two result,
\begin{align*}
     n\EN\left(f_{\tilde  \theta_{\bN} }-f_{\hat \theta_{\bN} }\right)=\frac{1}{2}n(\hat \theta_{\bN} -H^{-1}\bar \psi_\bN)' H(\hat \theta_{\bN} -H^{-1}\bar \psi_\bN)+o_P(1)\ge 0.
\end{align*}
Further, by \eqref{eq:a_solution}, the left hand side above can be bounded by
\begin{align*}
    o_P(1)=n\left(\sup_{\theta\in\Theta}\EN f_{\theta}-\EN f_{\hat \theta_{\bN} }\right)\ge n\EN\left(f_{\tilde  \theta_{\bN} }-f_{\hat \theta_{\bN} }\right).
\end{align*}
It follows that
\begin{align*}
    n(\hat \theta_{\bN} -H^{-1}\bar \psi_\bN)' H(\hat \theta_{\bN} -H^{-1}\bar \psi_\bN)=o_P(1)
\end{align*}
Since $H$ is positive definite, this implies
\(n^{1/2}\|\hat \theta_{\bN} -H^{-1}\bar \psi_\bN\|=o_P(1).\)
As  $\bar \psi_\bN=\sum_{k=1}^K \mathbb \mathbb \mathbb E_{k,\bN}\Delta_k$ consists of $K$ i.i.d. sums that are mutually independent,
the result then follows directly from standard CLT for i.i.d. data.
\end{proof}
\section{Proof of \Cref{thm:bootstrap}}\label{sec:proof_thm:bootstrap}
\begin{proof}[Proof of \Cref{thm:bootstrap}]
Assume without loss of generality that $d=1$ as the general case follows from a similar argument combined with the Cram\'er-Wold device.

By Theorem \ref{thm--non-smooth-M-Estimation}, we have the linearisation
\(
n^{1/2}\,\hat\theta_\bN
=
n^{1/2} H^{-1}\bar\psi_\bN+o_P(1)=O_P(1).
\)
We now establish an analogous expansion for the bootstrap process. To begin, note that the population objective remains unchanged, since
\(
\E[\xi_{\i} f_\theta(W_\i)]=\E[\xi_{\i} ]\cdot\E[f_\theta(W_\i)]=Q(\theta)
\) and thus we continue to have the expansion  \( \E[\xi_{\i} f_\theta(W_\i)]=Q(0)-\frac{1}{2}\theta'H\theta+o(\|\theta\|^2)\). Let us define the Hoeffding-type decomposition for the bootstrapped processes.
For each $\e=(e_1,...,e_K)\in\{0,1\}^K\setminus\{0\}$, define
\(
\calU_{\e}(\i)
=
\{U_{\i\odot \e'}\}_{\e'\le \e}\) and  \(
\xi_{\e}(\i)
=
\{\xi_{i_k}^k\}_{ e_k=1}.
\)
For any measurable function $g=g(W_{\i},\xi_{\i})$ with $\E[g]=0$ and $\E|g|<\infty$, let
\[
P_{\e}^{\xi} g(\calU_{\e}(\i),\xi_{\e}(\i))
=
\E[g(W_{\i},\xi_{\i}) \mid \calU_{\e}(\i),\xi_{\e}(\i)].
\]
We define the projections recursively. For each $k=1,\ldots,K$,
\[
(\pi_{\e_k}^{\xi} g)(\calU_{\e}(\i),\xi_{\e}(\i))=P_{\e_k}^{\xi} g(\calU_{\e}(\i),\xi_{\e}(\i)),
\]
and for $|\e|\ge 2$,
\[
(\pi_{\e}^{\xi} g)(\calU_{\e}(\i),\xi_{\e}(\i))
=
P_{\e}^{\xi} g(\calU_{\e}(\i),\xi_{\e}(\i))
-
\sum_{\e'<\e} (\pi_{\e'}^{\xi} g)(\calU_{\e'}(\i),\xi_{\e'}(\i)).
\]
The associated Hoeffding-type projections are then given by
\[
H_{\bN,\e}^{\xi}(g)
=
\frac{1}{|I_{\bN,\e}|}\sum_{\i\in I_{\bN,\e}}
(\pi_{\e}^{\xi} g)(\calU_{\e}(\i),\xi_{\e}(\i)).
\]
This yield the  Hoeffding-type decomposition for the bootstrapped process:
\[
\E_{\bN}^{\xi} g
=
\sum_{k=1}^K \sum_{\e\in\calE_k} H_{\bN,\e}^{\xi}(g),
\]
which has the property that, for  $\ell\in \supp(\e)$, we have
\(\E[(\pi_{\e}^{\xi} g)(\calU_\e(\i),\xi_\e(\i))\mid \calU_{\e-\e_\ell}(\i), \xi_{\e-\e_\ell}(\i)]=0.\)
This property implies that the maximal inequalities for Hoeffding-type projections developed in \cite{chen2026cross} can be applied.

Now define
\(
g_\theta(W_{\i},\xi_{\i})
=
\xi_{\i}(f_\theta(W_{\i})-\E[f_\theta(W_{\i})]).
\)
It follows that
\(
P_{\e_k}^{\xi} g_\theta
=
\xi_{i_k}^k P_{\e_k}(f_\theta - \E f_\theta).
\)
Invoking stochastic differentiability, as in the proof of \Cref{thm--non-smooth-M-Estimation}, we obtain
\[
P_{\e_k}(f_\theta - \E f_\theta) - P_{\e_k}(f_0 - \E f_0)
=
\theta'\Delta_k + \|\theta\| r_k.
\]
Consequently, the first-order projections satisfy
\(
\pi_{\e_k}^{\xi}(g_\theta - g_0)
=
\theta' \xi_{i_k}^k \Delta_k + \|\theta\| \xi_{i_k}^k r_k.
\)
The higher-order projection terms remain negligible because the multiplier weights only rescale the envelope: if $F$ is an envelope for $\calF$, then $\xi \cdot F$ is an envelope for $\xi\cdot \calF:=\{\xi \cdot f: f\in \calF\}$ with $\E[(\xi_\i F(W_\i))^2]<\infty$. As a result, the VC-type entropy bounds are unchanged, and the same maximal inequalities as in Theorem~2 apply, yielding identical  control of the higher-order terms. Hence,
 by analogous arguments as in the proof of \Cref{thm--non-smooth-M-Estimation}, we conclude that
\[
n^{1/2}\hat\theta_\bN^*
=
n^{1/2}H^{-1}\bar\psi_\bN^*+o_P(1),\quad\text{where}\quad\bar\psi_{\bN}^{*}
=\sum_{k=1}^K\E_{k,\bN}[\xi_\cdot^k \Delta_k]=
\sum_{k=1}^K \frac{1}{N_k}\sum_{i_k=1}^{N_k} \xi_{i_k}^k \Delta_k(U_{\i\odot\e_k})
\]
and the stochastic order is taken unconditionally.

Now, by \Cref{asm--exchangeable}, the $K$ summands of $\bar \psi_\bN$ are
mutually independent, and each summand is given by an average of i.i.d.\ random
variables.
A similar result holds for its bootstrap counterpart $\bar \psi_{\bN}^*$.
 Taking a difference of these two expressions,
 \begin{align*}
 n^{1/2}(\hat \theta_\bN^*-\hat \theta_\bN)=n^{1/2}H^{-1}\sum_{k=1}^K\E_{k,\bN}(\xi_\cdot^k-1)\Delta_k+o_P(1).
 \end{align*}
Fix any $k\in\{1,...,K\}$ and given $\mathcal D_\bN$,  $\E_{k,\bN}(\xi_\cdot^k-1)\Delta_k$ consists of an i.i.d. average with zero mean and variance
$$\hat\sigma_k^2=H^{-1}\left(\frac{1}{N_k}\sum_{i_k=1}^{N_k}(\xi_{i_k}^k-1)^2\Delta_k(U_{\i\odot \e_k})^2\right)H^{-1}.$$
Observe that $\hat\sigma_k^2\pto\sigma_k^2:=\lambda_k H^{-1}\var(\Delta_k)H^{-1}$ following standard weak law of large numbers for independent random variables. Applying the Berry--Esseen bound conditionally on $\mathcal D_\bN$, we have
 \begin{align*}
   \sup_{t\in \mathbb R} \left| \Pr\left(n^{1/2}H^{-1}\E_{k,\bN}(\xi_\cdot^k-1)\Delta_k\le t\mid \mathcal D_\bN\right)-  \Phi(t/\sigma_k )\right|=o_P(1).
 \end{align*}
 Since this argument holds for each $k$,
the desired result follows from the mutual independence of $\E_{k,\bN}(\xi_\cdot^k-1)\Delta_k$ over $k=1,...,K$ and the Gaussianity of their respective limiting distributions.

\end{proof}

\section{Auxiliary  results for maximum score}

\subsection{Proof of \Cref{lem--max-score-consistency}}
\label{sec--prf--lem--max-score-consistency}

Under \Cref{asm--exchangeable}, \Cref{lem--exchangeable-ulln-for-max-score}
shows that \(\sup_{b \in \mathcal{B}} \left| \hat{Q}_{n} (b) -
Q (b) \right| \pto 0\) for arbitrary \(\mathcal{B} \subseteq \sphere^{d - 1}\).
Under \Cref{asm--max-score-consistency},
\(Q\) is uniquely maximized at \(\beta_{0} \in \mathcal{B}\) and continuous
on the compact set \(\mathcal{B}\) --- as shown in \Cref{lem--b0-is-maximizer,%
lem--continuity-Q0}.
The conclusion follows from Theorem 2.1 of
\citet[p. 2121]{newey1994large}.
\qed

\subsubsection{Auxiliary results for the proof of
\Cref{lem--max-score-consistency}}

Denote
\begin{equation}
  \begin{gathered}
    g_{b} (x) = \ind \left\{ x^{\prime} b \geq 0 \right\},
    \quad q_{b} (w) = y \cdot g_{b} (x) = y \cdot \ind \left\{
    x^{\prime} b \geq 0 \right\}, \\
    \hat{Q}_{n} (b) = \E_{\bN} \left[ q_{b} (W) \right], \quad \text{and} \quad
    Q (b) = \E \left[ q_{b} (W) \right].
  \end{gathered}
  \label{eqn--max-score-objective-setup}
\end{equation}

\begin{lemma}
\label{lem--exchangeable-ulln-for-max-score}
Under \Cref{asm--exchangeable}, \(\sup_{b \in \mathcal{B}} \left| \hat{Q}_{n}
(b) - Q (b) \right| \pto 0\) for any \(\mathcal{B} \subseteq \sphere^{d - 1}\).
\end{lemma}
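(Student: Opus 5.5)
We need a uniform law of large numbers over $\{q_b : b\in\mathcal{B}\}$ under separate exchangeability. The plan is to combine a VC-type bound on the class with a maximal inequality for multiway-dependent data, applied through the Hoeffding-type decomposition \eqref{eq:hoeffding}.

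\textbf{Step 1 (VC structure).} The sets $\{x : x'b\ge 0\}$, $b\in\R^d$, are half-spaces through the origin. Since they are the positivity sets of a $d$-dimensional vector space of linear functions, they form a VC class of dimension at most $d+1$. Hence $\{g_b : b\in\mathcal{B}\}$ is VC-subgraph with envelope $1$. Multiplying by the fixed function $y\in\{-1,1\}$ preserves the VC-type property with uniform entropy bounded by $(A/\varepsilon)^{v}$ and envelope $F=1$, with characteristics that can be taken to satisfy the requirements in \Cref{asm--M-estimation-general} \eqref{asm--M-estimation-general-stoch-eq}. This also settles the VC claim used later in the proof of \Cref{thm--max-score-asymp-normal}.

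\textbf{Step 2 (decomposition and maximal inequality).} Center $\tilde q_b = q_b - Q(b)$ and write $\hat Q_n(b)-Q(b)=\sum_{k=1}^K\sum_{\e\in\calE_k}H_{\bN}^{\e}(\tilde q_b)$ via \eqref{eq:hoeffding}. Each $H^{\e}_{\bN}$ is an average over $|I_{\bN,\e}|\ge n^{|\e|}$ terms $(\pi_\e \tilde q_b)$ that are conditionally centered in each coordinate of $\supp(\e)$. The intended tool is the maximal inequality for Hoeffding-type projections of VC-type classes, Corollary~1 or Theorem~4 of \cite{chen2026cross}. With envelope $1$ and $\sigma\le 1$, it gives
\[
\E\Bigl[\sup_{b}\bigl|H^{\e}_{\bN}(\tilde q_b)\bigr|\Bigr]\lesssim |I_{\bN,\e}|^{-1/2}\sqrt{v\log A}+|I_{\bN,\e}|^{-1/2}n^{-1/2}(\text{log factor})\to 0.
\]
Summing the finitely many $\e$ and applying Markov's inequality yields the conclusion.

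\textbf{Fallback for the main obstacle.} The delicate point is measurability and the validity of the cited maximal inequality for the full class. Measurability can be handled because $\mathcal{B}$ may be replaced by a countable dense subset: the sets $\{x'b\ge0\}$ are right-continuous in $b$ only under $\Pr\{X'b=0\}=0$, which is not assumed here. So I would instead argue pointwise-measurable via rational $b$ together with the sets $\{x'b>0\}$, or treat outer expectations. If I prefer not to rely on \cite{chen2026cross}, a self-contained route is available:
\begin{itemize}
\item Separate exchangeability and dissociation imply that $\hat Q_n(b)-Q(b)$ is a reverse martingale along growing index sets in each direction, so Glivenko--Cantelli holds for the class.
\item Alternatively, use the symmetrization bound of \cite{2021daveziesEmpiricalProcessResults}. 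Their uniform law of large numbers for separately exchangeable arrays holds for classes with finite uniform entropy and an integrable envelope, which applies directly to the class from Step 1.
\end{itemize}
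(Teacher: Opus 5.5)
Your argument is correct, but its main line differs from the paper's. The paper does not use the Hoeffding-type decomposition \eqref{eq:hoeffding} for this lemma at all. It notes that the half-space indicators form a VC class with envelope $1$, takes the uniform-entropy bound from Theorem~2.6.7 of \citet{2023vandervaartWeakConvergenceEmpirical}, and directly invokes the uniform law of large numbers for separately exchangeable, dissociated arrays (Theorem~3.4 of \citet{2021daveziesEmpiricalProcessResults}). That is exactly your second fallback bullet.

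The paper's route is the economical one. It needs only a uniform-entropy bound and an integrable envelope for the original, unprojected class. It requires no conditional centring and no entropy control of the projected classes $\{\pi_{\e}\tilde q_b\}$.

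Your route applies the maximal inequality of \citet{chen2026cross} to each $H^{\e}_{\bN}$. It is heavier and inherits whatever side conditions that inequality carries. In exchange, it delivers a rate rather than mere consistency: $\sup_{b\in\calB}|\hat Q_n(b)-Q(b)|=O_P(n^{-1/2})$, because each $\e$ contributes $O_P(|I_{\bN,\e}|^{-1/2})$ and the first-order projections dominate. It also uses the same tool the paper later relies on for \Cref{thm--non-smooth-M-Estimation} and for \Cref{asm--M-estimation-general}\,\eqref{asm--M-estimation-general-stoch-eq}.

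Two side remarks follow.

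First, your Step~1 treats $q_b(w)=y\,g_b(x)$ directly; since $|y|=1$, multiplying by $y$ leaves every $L_2(Q)$ distance unchanged. This is cleaner than the paper's reduction to $Z=YX$. The paper's displayed identity $\hat Q_n(b)=2\E_{\bN}[g_b(Z)]-1$ is not literally right: one has $q_b(W)=g_b(YX)+(Y-1)/2$ whenever $X'b\neq 0$. That slip does not affect the conclusion.

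Second, your pointwise-measurability suggestion does not work when $d\ge 2$ and $\calB$ is uncountable, even if open half-spaces are added to the countable subfamily. Reproducing $\ind\{x'b\ge 0\}$ at both $x$ and $-x$ for every $x$ in the hyperplane $\{x'b=0\}$ forces the approximating directions to be parallel to $b$ eventually. The countable subfamily would then have to contain every direction in $\calB$. Use the outer-probability alternative you mention, or a Suslin-type argument based on the joint Borel measurability of $(b,x)\mapsto\ind\{x'b\ge0\}$ on $\sphere^{d-1}\times\R^d$. The paper is silent on this point.
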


\begin{proof}[Proof of \Cref{lem--exchangeable-ulln-for-max-score}]
Let \(Z = Y \cdot X\).
It can be shown that
\begin{equation*}
  \hat{Q}_{n} (b) = 2 \E_{\bN} \left[ g_{b} (Z) \right] - 1 \qquad
  Q (b) = 2 \E \left[ g_{b} (Z) \right] - 1,
\end{equation*}
for \(g_{b} (z) = \ind \left\{ z^{\prime} b \geq 0 \right\}\) in
\eqref{eqn--max-score-objective-setup}.
The collection of half-spaces \(\mathcal{C} = \left\{ \left\{ z^{\prime} b \geq
c \right\}:b \in \R^{k},\,c \in \R \right\}\)  is a VC class of
dimension \(k + 1\) and so, the class of linear half-spaces with \(c = 0\) must
also be a VC class of dimension at most \(k + 1\).
Similarly, the collection of linear half-spaces indexed by any subset of
\(\R^{k}\) must be a VC class of dimension at most \(k + 1\).
Indicator functions of a VC class of sets is a VC-subgraph class of
functions with the same dimension.
Therefore, \(\left\{ g_{b} : b \in \mathcal{B} \right\}\) is a VC class of
dimension at most \(k + 1\) for any \(\mathcal{B} \subseteq \sphere^{d - 1}\).
Furthermore, the envelope of this class is the trivial constant function taking
unit value everywhere.
By \Cref{asm--exchangeable}, Theorem 2.6.7 of
\citet[p. 206]{2023vandervaartWeakConvergenceEmpirical} and Theorem 3.4 of
\citet{2021daveziesEmpiricalProcessResults},
\(\sup_{b \in \mathcal{B}} \left| \E_{\bN} \left[ g_{b} (Z) \right] - \E \left[
g_{b} (Z) \right] \right| \pto 0\).
Conclude from the previous displayed equation that \(\sup_{b \in \mathcal{B}}
\left| \hat{Q}_{n} (b) - Q (b) \right| \pto 0\).
\end{proof}

By LIE \(Q (b) = \E \left[ m_{0} (X) g_{b} (X) \right]\) and
therefore
\begin{equation}
  \begin{split}
    Q \left( \beta_{0} \right) - Q (b) =
    & \, \E \left[ m_{0} (X) \ind \left\{ X^{\prime} \beta_{0} \geq 0 >
    X^{\prime} b \right\} - m_{0} (X) \ind \left\{ X^{\prime} b \geq 0 >
    X^{\prime} \beta_{0} \right\} \right] \\
    =
    & \, \E \left[ m_{0} (X) \ind \left\{ X^{\prime} \beta_{0} \geq 0
    \right\} \ind \left\{ X^{\prime} b < 0 \right\} \right] \\
    & - \E \left[ m_{0} (X) \ind \left\{ X^{\prime} \beta_{0} < 0 \right\}
    \ind \left\{ X^{\prime} b \geq 0 \right\} \right].
  \end{split}
  \label{eqn--Q0-b0-b-diff-1}
\end{equation}
Weak optimality of \(\beta_{0}\) follows from this representation and
\Cref{asm--max-score-consistency} \eqref{asm--latent-linear-median-regression}.
If in addition, \Cref{asm--max-score-consistency}
\eqref{asm--index-sgn-disagree-pos-pr} is maintained (or verified from lower-level
assumptions), then \(\beta_{0}\) becomes the unique maximizer of \(Q\).
\Cref{lem--b0-is-maximizer} establishes both of these claims.
\Cref{lem--continuity-Q0} shows that
\Cref{asm--max-score-consistency} \eqref{asm--continuity-under-constraint} is
sufficient for continuity of \(Q\) on \(\mathcal{B}\).

\begin{lemma}
\label{lem--b0-is-maximizer}
Under \Cref{asm--max-score-consistency}
\eqref{asm--latent-linear-median-regression},
\(m_{0}\) in \eqref{eqn--m0-def} satisfies
\begin{equation}
  \begin{split}
  m_{0} (X) \ind \left\{ X^{\prime} \beta_{0} \geq 0 \right\} =
  & \, \left| m_{0} (X) \right| \ind \left\{ X^{\prime} \beta_{0} \geq 0
  \right\}, \\
  \text{and} \quad
  m_{0} (X) \ind \left\{ X^{\prime} \beta_{0} < 0 \right\} =
  & \, - \left| m_{0} (X) \right| \ind \left\{ X^{\prime} \beta_{0} < 0
  \right\}.
  \end{split}
  \label{eqn--psi-sign-xpb0}
\end{equation}
Denote \(\sgn (u) = \ind \{u \geq 0\} - \ind \{u < 0\}\).
The difference in \eqref{eqn--Q0-b0-b-diff-1} can be written as
\begin{equation}
  \begin{split}
    Q \left( \beta_{0} \right) - Q (b) =
    & \, \E \left[ \left| m_{0} (X) \right| \cdot \left(
    \ind \left\{ X^{\prime} \beta_{0} \geq 0, X^{\prime} b < 0 \right\} +
    \ind \left\{ X^{\prime} \beta_{0} < 0, X^{\prime} b \geq 0 \right\}
    \right) \right] \\
    =
    & \, \E \left[ \left| m_{0} (X) \right| \cdot \ind \left\{ \sgn \left(
    X^{\prime} \beta_{0} \right) \neq \sgn \left( X^{\prime} b \right) \right\}
    \right],
  \end{split}
  \label{eqn--Q0-b0-b-diff}
\end{equation}
and hence \(Q \left( \beta_{0} \right) \geq Q (b)\) for every \(b \in
\sphere^{d - 1}\).
If in addition, \Cref{asm--max-score-consistency}
\eqref{asm--index-sgn-disagree-pos-pr} holds, then \(Q \left( \beta_{0} \right) >
Q (b)\) for every \(b \in \sphere^{d - 1} \setminus \left\{ \beta_{0}
\right\}\).
\end{lemma}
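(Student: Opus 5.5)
The first display, \eqref{eqn--psi-sign-xpb0}, is a sign statement for \(m_{0}\). I would derive it directly from \Cref{asm--max-score-consistency} \eqref{asm--latent-linear-median-regression}. Since \(Y \in \{-1,1\}\), we have
\[
m_{0}(X) = 2\Pr\{\epsilon \leq X'\beta_{0} \mid X\} - 1 .
\]
Because \(\mathrm{Med}[\epsilon \mid X] = 0\), we have \(\Pr\{\epsilon \leq 0 \mid X\} \geq 1/2\) and \(\Pr\{\epsilon \geq 0 \mid X\} \geq 1/2\) almost surely.

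On \(\{X'\beta_{0} \geq 0\}\), monotonicity of the conditional distribution function gives \(\Pr\{\epsilon \leq X'\beta_{0} \mid X\} \geq \Pr\{\epsilon \leq 0 \mid X\} \geq 1/2\). Hence \(m_{0}(X) \geq 0\), so \(m_{0}(X) = |m_{0}(X)|\) there. On \(\{X'\beta_{0} < 0\}\), we have \(\Pr\{\epsilon \leq X'\beta_{0} \mid X\} \leq \Pr\{\epsilon < 0 \mid X\} = 1 - \Pr\{\epsilon \geq 0 \mid X\} \leq 1/2\). Hence \(m_{0}(X) \leq 0\), so \(m_{0}(X) = -|m_{0}(X)|\) there. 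Multiplying each identity by the corresponding indicator yields \eqref{eqn--psi-sign-xpb0}.

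Next I would substitute these identities into \eqref{eqn--Q0-b0-b-diff-1}. The first expectation there is already restricted to \(\{X'\beta_{0} \geq 0\}\), so \(m_{0}\) becomes \(|m_{0}|\). The second is restricted to \(\{X'\beta_{0} < 0\}\), so \(-m_{0}\) becomes \(+|m_{0}|\). This gives the first line of \eqref{eqn--Q0-b0-b-diff}. The two events \(\{X'\beta_{0} \geq 0, X'b < 0\}\) and \(\{X'\beta_{0} < 0, X'b \geq 0\}\) are disjoint, and their union is exactly \(\{\sgn(X'\beta_{0}) \neq \sgn(X'b)\}\) under the stated convention \(\sgn(u) = \ind\{u \geq 0\} - \ind\{u < 0\}\). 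This gives the second line. The integrand is nonnegative, so \(Q(\beta_{0}) \geq Q(b)\) for all \(b \in \sphere^{d-1}\).

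For strict optimality, fix \(b \neq \beta_{0}\). By \Cref{asm--max-score-consistency} \eqref{asm--index-sgn-disagree-pos-pr}, the event \(\{|m_{0}(X)| > 0,\ \sgn(X'\beta_{0}) \neq \sgn(X'b)\}\) has positive probability. On this event the nonnegative integrand \(|m_{0}(X)|\ind\{\cdot\}\) is strictly positive. A nonnegative random variable that is strictly positive on a set of positive probability has strictly positive expectation, so \(Q(\beta_{0}) - Q(b) > 0\).

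The only delicate point is the boundary case \(X'\beta_{0} = 0\) together with the non-strict inequality in the median. This is handled above by using \(\Pr\{\epsilon \leq 0 \mid X\} \geq 1/2\), which is the defining property of a median, so no continuity of \(\epsilon\) is needed. Everything else is bookkeeping with indicators.
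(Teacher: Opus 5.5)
Your proof is correct and follows the paper's route: derive the sign identities for $m_0$ from the median restriction, substitute them into the difference representation, use nonnegativity of the integrand, and invoke Assumption 2(ii) for strictness. The only minor difference is in the sign step. The paper uses the quantile--CDF equivalence $p \le F(u) \iff Q_F(p) \le u$, which gives the biconditional $X'\beta_0 \ge 0 \iff m_0(X) \ge 0$ when the median is taken as $Q_F(1/2)$. You instead use the two defining median inequalities directly, which gives exactly the weak inequalities the lemma needs and works for any choice of median.
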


\begin{proof}[Proof of \Cref{lem--b0-is-maximizer}]
Under \Cref{asm--max-score-consistency}
\eqref{asm--latent-linear-median-regression},
we deduce \eqref{eqn--psi-sign-xpb0} from \(X^{\prime} \beta_{0} \geq 0\) if and
only if \(m_{0} (X) \geq 0\).
The latter follows from this well known fact about CDF's: if \(F\) is a
right-continuous CDF and \(Q_{F}\) is its associated quantile function, then
given any \((p, u) \in (0, 1) \times \R\), \(p \leq F (u)\) if and only if
\(Q_{F} (p) \leq u\).
The representation \eqref{eqn--Q0-b0-b-diff} then follows by combining this now
established fact with
\eqref{eqn--psi-sign-xpb0} and the expression in \eqref{eqn--Q0-b0-b-diff-1}.
Then \(Q \left( \beta_{0} \right) - Q (b) \geq 0\) for every \(b
\in \sphere^{d - 1}\) follows since the integrand in
\eqref{eqn--Q0-b0-b-diff} is non-negative.
With the addition of \Cref{asm--max-score-consistency}
\eqref{asm--index-sgn-disagree-pos-pr}, the integrand in \eqref{eqn--Q0-b0-b-diff}
is positive with positive probability.
\end{proof}

\begin{lemma}
\label{lem--continuity-Q0}
Under \Cref{asm--max-score-consistency} \eqref{asm--continuity-under-constraint},
\(Q\) in \eqref{eqn--max-score-objective-setup} is continuous everywhere on
\(\mathcal{B}\).
\end{lemma}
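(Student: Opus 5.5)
Under \Cref{asm--max-score-consistency} \eqref{asm--continuity-under-constraint}, $Q$ is continuous on $\mathcal{B}$. The plan is to write $Q(b) = \E[Y g_b(X)]$ and argue via dominated convergence along sequences. Fix $b \in \mathcal{B}$ and a sequence $b_m \in \mathcal{B}$ with $b_m \to b$. Then
\[
|Q(b_m) - Q(b)| \le \E\bigl[|Y|\,|g_{b_m}(X) - g_b(X)|\bigr] = \E\bigl[|g_{b_m}(X) - g_b(X)|\bigr],
\]
since $|Y| = 1$. The integrand is bounded by the constant $1$, so it suffices to show $g_{b_m}(X) \to g_b(X)$ almost surely.

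Pointwise convergence off the boundary hyperplane is immediate. On the event $\{X'b \neq 0\}$, we have $X'b_m \to X'b$ because $|X'b_m - X'b| \le \|X\|\,\|b_m - b\|$ and $\|X\| < \infty$ almost surely. Hence for all large $m$, $X'b_m$ has the same strict sign as $X'b$, and $\ind\{X'b_m \ge 0\} = \ind\{X'b \ge 0\}$ eventually.

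The remaining event $\{X'b = 0\}$ is where the indicator may jump. This is the only step where anything could fail, and it is disposed of by assumption: because $b \in \mathcal{B}$, \Cref{asm--max-score-consistency} \eqref{asm--continuity-under-constraint} gives $\Pr\{X'b = 0\} = 0$. So $g_{b_m}(X) \to g_b(X)$ almost surely.

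The bounded convergence theorem then yields $Q(b_m) \to Q(b)$. Since $b$ and the sequence were arbitrary, and $\mathcal{B}$ is a subset of the metric space $\sphere^{d-1}$ where sequential continuity suffices, $Q$ is continuous on $\mathcal{B}$.

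Two remarks on scope. Continuity is only claimed relative to $\mathcal{B}$, so the approximating sequences stay in $\mathcal{B}$. Also, only the limit point $b$ needs the null-boundary property, not the $b_m$.
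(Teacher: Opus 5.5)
Your proof is correct and is essentially the paper's argument: bounded convergence, using that $g_{b_m}(X)\to g_b(X)$ off the hyperplane $\{X'b=0\}$, which is null by \Cref{asm--max-score-consistency} \eqref{asm--continuity-under-constraint}. The only difference is cosmetic: you bound $|Q(b_m)-Q(b)|$ directly by $\E|g_{b_m}(X)-g_b(X)|$ using $|Y|=1$, whereas the paper passes to $m_0(X)$ via iterated expectations and splits $g_{b_n}-g_b$ into the three indicator pieces $B_{n,1},B_{n,2},B_{n,3}$.
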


\begin{proof}[Proof of \Cref{lem--continuity-Q0}]
Let \(b \in \mathcal{B}\) and let \(\left\{ b_{n} \right\} \subseteq
\mathcal{B}\) be a sequence with \(b_{n} \to b\).
Denote
\begin{equation*}
  \begin{gathered}
    g_{b_{n}} (x) - g_{b} (x) = B_{n, 1} (x) - B_{n, 2} (x) - B_{n, 3} (x),
    \text{ where } B_{n, 1} (x) = \ind \left\{ x^{\prime} b_{n} \geq 0 >
    x^{\prime} b \right\}
    \\
    B_{n, 2} (x) = \ind \left\{ x^{\prime} b > 0 > x^{\prime} b_{n} \right\}
    \text{ and }
    B_{n, 3} (x) = \ind \left\{ x^{\prime} b = 0 > x^{\prime} b_{n} \right\}.
  \end{gathered}
\end{equation*}
Then
\begin{equation*}
  Q \left( b_{n} \right) - Q (b) = \E \left[ m_{0} (X) B_{n, 1}
  (X) \right] - \E \left[ m_{0} (X) B_{n, 2} (X) \right] - \E \left[
  m_{0} (X) B_{n, 3} (X) \right].
\end{equation*}

Since \(\left| m_{0} \right|, \left| B_{n, j} \right| \leq 1\) (for each \(j
= 1, 2, 3\)), \(Q \left( b_{n} \right) - Q (b) \to 0\) follows by
bounded convergence if \(B_{n, j} (X) \pto 0\).
Start with \(B_{n, 1}\) and note that
\begin{equation*}
  B_{n, 1} (x) = \ind \left\{ x^{\prime} \left( b_{n} - b \right) \geq -
  x^{\prime} b > 0 \right\}.
\end{equation*}
Given \(x\) with \(x^{\prime} b \geq 0\), we immediately have \(B_{n, 1} (x) =
0\) and so, take any \(x\) for which \(- x^{\prime} b > 0\).
By \(b_{n} \to b\), it follows that \(x^{\prime} \left( b_{n} - b \right) \to
0\) which necessitates \(B_{n, 1} (x) = 0\) eventually for \(n\) sufficiently
large.
Therefore, \(B_{n, 1} \to 0\) pointwise which implies \(B_{n, 1} (X) \pto 0\).
By a symmetric argument, \(B_{n, 2} \to 0\) pointwise which implies \(B_{n, 2}
(X) \pto 0 \) as well.

Finally for \(B_{n, 3}\) note that for every \(n \in \mathbb{N}\), \(0 \leq
B_{n, 3} (x) \leq \ind \{x^{\prime} b = 0\}\).
By the condtion \(\Pr \left\{ X^{\prime} b = 0 \right\} = 0\) in
\Cref{asm--max-score-consistency} \eqref{asm--continuity-under-constraint},
\(B_{n, 3} (X) = 0\) with probability one.
\end{proof}

\subsection{Auxiliary results for the proof of
\Cref{thm--max-score-asymp-normal}}
\label{sec:lem--kim1990-diff-is-surf}
For the following lemma, note that the Jacobian of $\beta (\theta)$ in
\eqref{eqn--beta-local} takes the form
\begin{equation*}
  \dot{\beta} (\theta) = B_{0} - \beta_{0} \frac{\theta^{\prime}}{\sqrt{1 -
  \|\theta\|^{2}}}.
  \label{eqn--beta-local-jac}
\end{equation*}

\begin{lemma}
\label{lem--kim1990-diff-is-surf}
Let \(\phi (x, u)\) be a real-valued function such that for each
\(u\), \(x \mapsto \phi (x,u)\) is continuously differentiable in \(x\).
Given \(\beta (\theta)\) in \eqref{eqn--beta-local},
define
\begin{equation}
  A (\theta) = \left\{ x : x' \beta (\theta) \geq 0 \right\}, \qquad
  T_{\theta} x = T (x, \theta) = \left[ \left( \I_{d} - \beta (\theta)
  \beta (\theta)^{\prime} \right) \left( \I_{d} - \beta_{0} \beta_{0}^{\prime}
  \right) + \beta (\theta) \beta_{0}^{\prime} \right] x,
\end{equation}
and
\begin{equation*}
  \lambda (u, \theta) = \int_{A (\theta)} \phi (x,u) \; \mathrm{d} x.
\end{equation*}
Then the Jacobian of \(\lambda\) with respect to \(\theta\) is
\begin{equation*}
  \partial_{\theta} \lambda (u, \theta) = \left[ \int_{\partial A_{0}} \phi
  (T_{\theta} x,u) x^{\prime} \sigma_{0} (\mathrm{d} x) \right] B_{0}.
\end{equation*}
Denote the Jacobian of \(\phi\) with respect to \(x\) by \(\dot{\phi} (x, u)\).
The Hessian of \(\lambda\) with respect to \(\theta\) is
\begin{equation*}
  \begin{split}
    \partial^{2}_{\theta \theta^{\prime}} \lambda (u, \theta) =
    & \, - B_{0}^{\prime} \left[ \int_{\partial A_{0}} \left( \dot{\phi}
    (T_{\theta} x,u) \beta (\theta) \right) x x^{\prime} \sigma_{0}
    (\mathrm{d} x) \right] B_{0} \\
    & - \left[ \int_{\partial A_{0}} \theta^{\prime} B_{0} x \dot{\beta}
    (\theta)^{\prime} \dot{\phi} (T_{\theta} x,u)^{\prime} x^{\prime} \sigma_{0}
    (\mathrm{d} x) \right] B_{0}.
  \end{split}
\end{equation*}
\end{lemma}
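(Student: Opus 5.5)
The plan is to reduce both formulas to differentiating an integral over a moving half-space. The key device is the linear map \(T_\theta\), which carries the fixed half-space \(A_0\) onto \(A(\theta)\) and the fixed hyperplane \(\partial A_0\) onto \(\partial A(\theta)\). Once the integral is written over a fixed domain, the variation in \(\theta\) sits entirely in the integrand.

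\textbf{Properties of \(T_\theta\).} For \(x\in\partial A_0\) we have \(\beta_0'x=0\), so
\[
T_\theta x=(\I_d-\beta(\theta)\beta(\theta)')x .
\]
This is the orthogonal projection of \(x\) onto \(\beta(\theta)^\perp=\partial A(\theta)\). For \(x=\beta_0\), \(T_\theta x=\beta(\theta)\). Hence \(T_\theta\) is a linear bijection with \(T_\theta(\partial A_0)=\partial A(\theta)\) and \(\beta(\theta)'T_\theta x=\beta_0'x\), so \(T_\theta A_0=A(\theta)\).

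\textbf{Jacobian (first derivative).} I would follow the approach of Kim and Pollard (1990, Example 6.4). Write each \(x\) as \(x=z+t\beta(\theta)\), where \(z\in\partial A(\theta)\) and \(t=x'\beta(\theta)\). Then
\[
\lambda(u,\theta)=\int_{\partial A(\theta)}\int_0^\infty \phi(z+t\beta(\theta),u)\,\mathrm{d}t\,\mathrm{d}z .
\]
Pulling \(z\) back via \(z=T_\theta y\) with \(y\in\partial A_0\), the Jacobian factor is one to the relevant order. This is the step I expect to be delicate, since \(T_\theta\) restricted to \(\partial A_0\) is a projection and not an isometry. The cleaner route avoids the volume factor altogether. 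Perturb \(\theta\to\theta+h\). The symmetric difference \(A(\theta+h)\triangle A(\theta)\) is a thin wedge around \(\partial A(\theta)\). At \(z\in\partial A(\theta)\) its signed thickness is \(z'(\beta(\theta+h)-\beta(\theta))=z'\dot\beta(\theta)h+o(\|h\|)\). Hence
\[
\partial_\theta\lambda=\int_{\partial A(\theta)}\phi(z,u)\,z'\,\sigma_\theta(\mathrm{d}z)\,\dot\beta(\theta).
\]
This is justified by dominated convergence using continuity of \(\phi\) and the integrability assumed in the application. I would then change variables to \(\partial A_0\) via \(T_\theta\) and simplify \(z'\dot\beta(\theta)\) using \(\beta(\theta)'z=0\) and the explicit form \(\dot\beta(\theta)=B_0-\beta_0\theta'/\sqrt{1-\|\theta\|^2}\). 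The aim is to land on the stated expression \(\bigl[\int_{\partial A_0}\phi(T_\theta x,u)x'\sigma_0(\mathrm{d}x)\bigr]B_0\).

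\textbf{Hessian.} This step is routine once the Jacobian is written over the fixed domain \(\partial A_0\). I would differentiate under the integral sign, justified by dominated convergence via continuity of \(\dot\phi\). The chain rule gives \(\partial_\theta\phi(T_\theta x,u)=\dot\phi(T_\theta x,u)\,\partial_\theta(T_\theta x)\). From the definition of \(T_\theta\), for \(x\in\partial A_0\),
\[
\partial_\theta(T_\theta x)=-\dot\beta(\theta)\,\beta(\theta)'x-\beta(\theta)\,x'\dot\beta(\theta).
\]
The two resulting pieces should be rearranged into the two displayed terms. Use \(x'\dot\beta(\theta)=x'B_0\) (since \(\beta_0'x=0\)) for the \(xx'\) term. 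Use \(\beta(\theta)'x=\theta'B_0'x\) on \(\partial A_0\) for the second term, which carries \(\theta'B_0x\,\dot\beta(\theta)'\dot\phi'\).

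\textbf{Main obstacle.} The hard step is the bookkeeping in the Jacobian step. The surface measure and the \(B_0\) factors must be matched exactly to the stated form, which uses \(\sigma_0\) rather than \(\sigma_\theta\). If the factors do not reconcile, I would first check them at \(\theta=0\), where \(T_0=\I_d\) and everything is transparent. I would then argue that the stated formula is what the application actually uses, namely \(\Delta_k\) at \(\theta=0\) and continuity of the Hessian near \(0\).
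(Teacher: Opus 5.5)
Your route is the paper's: transport $A(\theta)$ back to $A_0$ with $T_\theta$, get the first derivative from a Kim--Pollard type boundary formula, and get the Hessian by differentiating under the integral through $T_\theta$. The gap is the step you defer (``change variables to $\partial A_0$ via $T_\theta$ and simplify \ldots\ the aim is to land on the stated expression''). Done correctly, it does not land there when $\theta\neq 0$. Your remark that the volume factor is ``one to the relevant order'' is where the argument fails.

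Your intermediate formula $\partial_\theta\lambda=\int_{\partial A(\theta)}\phi(z,u)\,z'\,\sigma_\theta(\mathrm{d}z)\,\dot\beta(\theta)$ is correct. Put $z=T_\theta x=(\mathbb{I}_d-\beta(\theta)\beta(\theta)')x$ with $x\in\partial A_0$. Then $z'\dot\beta(\theta)=x'\dot\beta(\theta)=x'B_0$ exactly, using $\beta(\theta)'\dot\beta(\theta)=0$ and $\beta_0'x=0$. However, $T_\theta$ restricted to $\partial A_0$ is the orthogonal projection onto $\partial A(\theta)$. Between hyperplanes whose unit normals have inner product $\beta_0'\beta(\theta)$, this projection scales surface measure by that inner product, so $\sigma_\theta(\mathrm{d}z)=\sqrt{1-\|\theta\|^2}\,\sigma_0(\mathrm{d}x)$. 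Hence
\[
\partial_\theta\lambda(u,\theta)=\sqrt{1-\|\theta\|^{2}}\left[\int_{\partial A_0}\phi(T_\theta x,u)\,x'\,\sigma_0(\mathrm{d}x)\right]B_0 .
\]
The true Hessian is then $\sqrt{1-\|\theta\|^2}$ times the displayed one, plus $-\theta(1-\|\theta\|^2)^{-1/2}\left[\int_{\partial A_0}\phi(T_\theta x,u)\,x'\,\sigma_0(\mathrm{d}x)\right]B_0$.

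A two-dimensional check confirms this. Take $\beta_0=(0,1)'$, $B_0=(1,0)'$ and $\phi(x)=x_1e^{-\|x\|^2/2}$. Then $\lambda(\theta)=\sqrt{2\pi}\,\theta$, so the Jacobian is $\sqrt{2\pi}$ and the Hessian is $0$. The stated formulas instead give $\sqrt{2\pi}(1-\theta^2)^{-1/2}$ and $\sqrt{2\pi}\,\theta(1-\theta^2)^{-3/2}$. So no bookkeeping can reconcile the factors: the lemma as stated holds only at $\theta=0$. Your Hessian step, which differentiates the stated Jacobian, inherits the error. Your expression for $\partial_\theta(T_\theta x)$ is fine; you only need the product rule on the extra factor.

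Your suspicion is therefore well founded, and the paper's own proof has the same slip. It records that $\sigma_\theta$ has density $\rho_\theta=\sqrt{1-\|\theta\|^2}$ with respect to $\sigma_0\circ T_\theta^{-1}$, then omits $\rho_\theta$ when applying equation (5.3) of Kim and Pollard.

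Your fallback is the right repair, but it must be carried out as a correction, not as an argument that the stated formula is what gets used. Concretely:
\begin{enumerate}[(i)]
  \item Prove the corrected formulas above.
  \item Observe that they agree with the stated ones at $\theta=0$, so $\Delta_k$ in the proof of \Cref{thm--max-score-asymp-normal} is unchanged.
  \item Verify that the corrected Hessian is still continuous in $\theta$ and dominated on a neighbourhood such as $\|\theta\|\le 1/2$. This matters because that proof evaluates $\ddot\lambda_k$ at intermediate points $\bar\theta_1,\bar\theta_2\neq 0$ in its mean-value expansion.
\end{enumerate}
Step (iii) follows from the same kind of domination used for the displayed terms, applied also to $\int_{\partial A_0}\phi(T_\theta x,u)\,x'\,\sigma_0(\mathrm{d}x)$, since the extra factors $\sqrt{1-\|\theta\|^2}$ and $\theta/\sqrt{1-\|\theta\|^2}$ are smooth and bounded there.

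Finally, $\theta'B_0x$ in the displayed Hessian should read $\theta'B_0'x$ for the dimensions to match.
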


\begin{proof}[Proof of \Cref{lem--kim1990-diff-is-surf}]
For fixed \(\theta\), \(T_{\theta}\) is linear and maps \(A_{0}\) onto
\(A (\theta)\) and \(\partial A_{0}\) onto \(\partial A (\theta)\).
Furthermore for \(\|\theta\| < 1\), \(T_{\theta}\) is invertible and hence, a
diffeomorphism.
Expanding the expression above, by \(\beta (\theta)^{\prime} \beta_{0} = \sqrt{1
- \|\theta\|^{2}}\),
\begin{equation*}
  T_{\theta} = \I_{d} - \beta_{0} \beta_{0}^{\prime} + \left( \sqrt{1 -
  \|\theta\|^{2}} + 1 \right) \beta (\theta) \beta_{0}^{\prime} - \beta (\theta)
  \beta (\theta)^{\prime}
\end{equation*}
Since \(\beta (\theta)^{\prime} x = \theta^{\prime} B_{0}^{\prime} x + \sqrt{1 -
\|\theta\|^{2}} \cdot \beta_{0}^{\prime} x\),
\begin{equation*}
  \begin{split}
    T_{\theta} x =
    & \, x - \beta_{0} \left( \beta_{0}^{\prime} x \right) + \left(
    \beta_{0}^{\prime} x \right) \left( 1 + \sqrt{1 - \|\theta\|^{2}} \right)
    \beta (\theta) - \left[ \theta^{\prime} B_{0}^{\prime} x + \sqrt{1 -
    \|\theta\|^{2}} \cdot \beta_{0}^{\prime} x \right] \beta (\theta) \\
    =
    & \, x - \beta_{0} \left( \beta_{0}^{\prime} x \right) + \left(
    \beta_{0}^{\prime} x - \theta^{\prime} B_{0}^{\prime} x \right)
    \beta (\theta).
  \end{split}
\end{equation*}
Denote the Jacobian of \(T_{\theta} x\) with respect to \(\theta\) by
\begin{equation}
  \partial_{\theta} T_{\theta} x = \left( \beta_{0}^{\prime} x - \theta^{\prime}
  B_{0}^{\prime} x \right) \dot{\beta} (\theta) - \beta (\theta) x^{\prime}
  B_{0}.
  \label{eqn--T-jac}
\end{equation}

Following similar reasoning to Example 6.4 of \citet{kim1990cube},
note that the surface measure \(\sigma_{\theta} (\cdot)\) on \(\partial A
(\theta)\) has the constant density \(\rho_{\theta} (x) = \beta_{0}^{\prime}
\beta (\theta) = \sqrt{1 - \|\theta\|^{2}}\) with respect to the pushforward
\(\sigma_{0} \circ T_{\theta}^{- 1}\) restricted to subsets of \(\partial
A_{0}\).
The outward pointing normal to \(A (\theta)\) is \(- \beta (\theta)\).
For a function \(\phi (x,u)\) that is continuously differentiable in
\(x\) and integrable--\(\mathrm{d} x\), let
\begin{equation*}
  \lambda (u, \theta) = \int_{A (\theta)} \phi (x,u) \; \mathrm{d} x.
\end{equation*}
By Equation (5.3) of \citet{kim1990cube},%
\footnote{See also the expression for \(\frac{\partial}{\partial \beta} \Gamma
(\beta)\) for Example 6.4 in p. 214 of \citet{kim1990cube}.}
\begin{equation*}
  \partial_{\theta} \lambda (u, \theta) = \int_{\partial A_{0}} \phi (T_{\theta}
  x,u) \left( \beta (\theta)^{\prime} \left[ \beta (\theta)
  x^{\prime} B_{0} - \left( \beta_{0}^{\prime} x - \theta^{\prime}
  B_{0}^{\prime} x \right) \dot{\beta} (\theta) \right] \right) \sigma_{0}
  (\mathrm{d} x).
\end{equation*}
The following can be shown directly: \(\beta (\theta)^{\prime} \dot{\beta}
(\theta) = 0\) for each \(\theta\).
To see the intuition for this, note that \(\|\beta (\theta)\|^{2} = 1\) for
every \(\theta\) and \(2 \beta (\theta)^{\prime} \dot{\beta}
(\theta) = \frac{\partial}{\partial \theta^{\prime}} \|\beta (\theta)\|^{2} =
0^{\prime}\).
Conclude that
\begin{equation*}
  \partial_{\theta} \lambda (u, \theta) = \left[ \int_{\partial A_{0}} \phi
  (T_{\theta} x,u) x^{\prime} \sigma_{0} (\mathrm{d} x) \right] B_{0}.
\end{equation*}
The Hessian is then
\begin{equation*}
  \partial^{2}_{\theta \theta^{\prime}} \lambda (u, \theta) = \left[
  \int_{\partial A_{0}} \left( \partial_{\theta} T_{\theta} x \right)^{\prime}
  \dot{\phi} (T_{\theta} x,u)^{\prime} x^{\prime} \sigma_{0} (\mathrm{d} x)
  \right] B_{0}.
\end{equation*}
Utilizing the expression \eqref{eqn--T-jac} for \(\partial_{\theta} T_{\theta}
x\), it follows that when integrating over \(x \in \partial A_{0}\) (so that
\(x^{\prime} \beta_{0} = 0\)),
\begin{equation*}
  \begin{split}
    \partial^{2}_{\theta \theta^{\prime}} \lambda (u, \theta) =
    & \, - B_{0}^{\prime} \left[ \int_{\partial A_{0}} \left( \dot{\phi}
    (T_{\theta} x,u) \beta (\theta) \right) x x^{\prime} \sigma_{0}
    (\mathrm{d} x) \right] B_{0} \\
    & - \left[ \int_{\partial A_{0}} \theta^{\prime} B_{0} x \dot{\beta}
    (\theta)^{\prime} \dot{\phi} (T_{\theta} x,u)^{\prime} x^{\prime}
    \sigma_{0}
    (\mathrm{d} x) \right] B_{0},
  \end{split}
\end{equation*}
as desired.
\end{proof}

\newpage
\bibliographystyle{ecta}
\bibliography{main}
\end{document}